\newtheorem{theorem}{Theorem}[section]
\newtheorem{lemma}[theorem]{Lemma}
\newtheorem{proposition}[theorem]{Proposition}
\newtheorem{corollary}[theorem]{Corollary}
\newtheorem{definition}[theorem]{Definition}
\newenvironment{proof}[1][Proof:]{\begin{trivlist}
\item[\hskip \labelsep {\bfseries #1}]}{\end{trivlist}}
\newenvironment{remark}[1][Remark(s):]{\begin{trivlist}
\item[\hskip \labelsep {\bfseries #1}]}{\end{trivlist}}
\newcommand{\qed}{\nobreak \ifvmode \relax \else
      \ifdim\lastskip<1.5em \hskip-\lastskip
      \hskip1.5em plus0em minus0.5em \fi \nobreak
      \vrule height0.75em width0.5em depth0.25em\fi}
\begin{document}

\title{Planar Graphical Models which are Easy}
\author {Vladimir Y. \surname{Chernyak}$^{a,b}$}
\author{Michael \surname{Chertkov}$^{b,c}$}

\affiliation{$^a$Department of Chemistry, Wayne State University,
5101 Cass Ave,Detroit, MI 48202\\
$^b$Center for Nonlinear Studies and Theoretical Division, LANL, Los Alamos, NM 87545\\
$^c$New Mexico Consortium, Los Alamos, NM 87544, USA}

\date{\today}

\begin{abstract}
We describe a rich family of binary variables statistical mechanics models on a given planar graph which are equivalent to Gaussian Grassmann Graphical models (free fermions) defined on the same graph. Calculation of the partition function (weighted counting) for such a model is easy (of polynomial complexity) as reducible to evaluation of a Pfaffian of a matrix of size equal to twice the number of edges in the graph. In particular,  this approach touches upon Holographic Algorithms of Valiant \cite{02Val,04Val,08Val} and utilizes the Gauge Transformations discussed in our previous works \cite{06CCa,06CCb,08CCT,08CCa,08CCb}.
\end{abstract}

\pacs{02.50.Tt, 64.60.Cn, 05.50.+q}

\maketitle
\section{Introduction}

This paper rests on classical results derived and discussed in statistical physics and computer science. Onsager solution \cite{44Ons} of the two-dimensional Ising model on a square grid, its combinatorial interpretation by Kac, Ward \cite{52KW} and Vdovichenko \cite{65Vdo}, and the relation between the Ising model and the dimer model, established independently by Temperley, Fisher \cite{61TF,66Fis} and Kasteleyn \cite{61Kas,63Kas}, have aimed mainly at analysis of phase transitions in parametrically homogeneous infinite systems. However, the algebraic and combinatorial techniques used in these papers were intrinsically microscopic, and thus suitable for analysis of a much broader class of problems,  e.g. these that are parametrically inhomogeneous (glassy) and are formulated on finite planar graphs. In fact, the approach was extended \cite{66Fis,67Kas} to inhomogeneous versions of these models formulated on an arbitrary finite planar graph of size $N$. It was shown that, under these conditions, the partition (generating) function of an arbitrary Ising model (however without magnetic field) or of
the dimer model are represented by Pfaffians (or determinants) of the properly defined $N\times N$ matrices. Note that computation of a Pfaffian/determinant requires utmost $N^3$ steps, while calculation of the partition function for
a generic statistical physics model is a task of likely exponential complexity. More accurately, this is the task of the $\#P$ complexity class according to computer science classification, started with the classical papers of Cook \cite{71Coo} and Karp \cite{72Kar}. The  $\#P$ feature of the generic statistical physics model makes the easiness of the planar Ising and dimer models surprising and exceptional,  especially in view of the results of Barahona  \cite{82Bar}, who showed that adding magnetic field (even homogeneous) to the planar Ising model immediately elevates the
weighted counting problem (this is the partition function computation) to be of the $\#P$-complete class, i.e. of the complexity necessary and sufficient for solving any other problem that belongs to the $\#P$ hierarchy \footnote{The  $\#P$, pronounced sharp-P, and $\#P$-complete classes were introduced by Valiant in \cite{79Val}, who studied the complexity of calculating the matrix permanent.}. Similarly,  adding monomers turns the dimer-monomer model on a planar graph (or even on a planar bi-partite graph) into a problem that belongs to the $\#P$-complete class \cite{97Vad}\footnote{Dimer and monomer-dimer models from statistical physics are called perfect matching and matching models, respectively, in computer
science.}.

One wonders if the easiness of the dimer and Ising models on planar graphs is a lucky exception or may be it is, in fact, just a little top of yet unexplored iceberg? A new approach of Valiant \cite{02Val,04Val,08Val}, coined by the author ``holographic algorithm", shed some additional light onto this question. In \cite{02Val,04Val,08Val} Valiant described a list of easy planar models reducible to dimer models on planar graphs via a set of ``gadgets". The gadgets were of ``classical" and ``holographic" types. A classical gadget describes an elementary graphical transformation, applied to the variables defined on a graph element, e.g. edge or a region, that preserves a one-to-one correspondence between the configurations of the original and transformed models. An example of a classical gadget would be an approach used by Fisher \cite{66Fis} (point-to-triangle transformation) to map the Ising model onto a dimer model. A holographic gadget of \cite{02Val,04Val,08Val} involves a linear transformation of the parametrization basis for the binary variables, so that the solution fragments of the original and mapped models would be in a mixed relation when certain sums, rather than individual elements of the original and derived formulations, are related to each other. The freedom in choosing an arbitrary nonsingular basis for the holographic transformation was
discussed in \cite{02Val,04Val,08Val}, however, it was explored there in a somewhat limited fashion. One emphasis of \cite{02Val,04Val,08Val} was on generating a polynomial time algorithm (via reduction to a determinant) for a number of problems for which only exponential time algorithms where known before, such as counting the edge orientations on a planar graph of maximal degree $3$ with no nodes containing all the edges directed towards or away from it. This model belongs to the class of ``ice" problems studied earlier in statistical physics \cite{67Lie_a,67Lie_b,07Bax}. We will actually discuss this example in details later in Section \ref{subsec:ice}. A class of easy factor-function models, stated in terms of binary edge variables on planar graphs of degree not larger than three, was also discussed in \cite{08CCT}. These models were stated in terms of a set of algebraic equations, one per vertex, constraining the factor-functions of the model.

In this paper we present a family of statistical models, stated in terms of binary variables defined on the edges of a given planar graph, with the partition functions reducible to Pfaffians of square matrices of the size equal to twice the number of graph edges. The general model of this class (easy on a given planar graph) can be defined in two consecutive steps: (a) construct an arbitrary Wick Binary Graphical Model (WBG) on the graph, and (b) further apply an arbitrary Gauge transformation of the type discussed in our early work devoted to the Loop Calculus (LC) approach \cite{06CCa,06CCb,08CCT,08CCa,08CCb}. Of the two steps, both required to characterize this class of easy binary models on a given graph, the description of the easy WBG models on a given planar graph and consecutively showing that it is Pfaffian (or det)-easy,  is the most novel contribution of the paper. Therefore, we find it useful to informally describe a general WBG model already here in the Introduction. (See Section \ref{subsec:WBG_def} for the formal description.)

A WBG model is defined in terms of painting the graph edges (each edge can be colored or not) with the ``even coloring" requirement (for each vertex the number of colored edges adjusted to it should be even). The weight of an allowed configuration (an even coloring) is given by the product of the vertex weights/contributions. A vertex contribution is equal to unity when no edges adjusted to the vertex are colored. The vertex weights are introduced for all pair colorings, defined as painting of pairs of edges adjusted to the vertex.  Thus, the total number of unconstrained parameters associated with a vertex is $N_v(N_v-1)/2$, where $N_v$ is the vertex degree (valence), i.e., the number of edges adjusted to the vertex. The weights of all higher-degree colorings ($4,6,\cdots$ edges of the vertex are colored) are not independent, but rather explicitly expressed in terms of the pair weights according to a simple rule, illustrated below using an example of a vertex of degree four. Assume that the edges of the degree four vertex are numbered clockwise $1,2,3,4$ with the corresponding weights of the pair-wise colorings being $w_{12}, w_{13},w_{14},w_{23},w_{24},w_{34}$.  Then, the weight of the four-edge coloring is $w_{12}w_{34}-w_{13}w_{24}+w_{14}w_{23}$. We refer to the described rule and the corresponding model as the Wick rule/model, since modulo signs, the expression for the higher-order weights is represented by the Wick's decomposition of a higher-order (even) correlation functions of a Gaussian random field in terms of the pair correlation functions (covariances). The sign rule for an individual contribution (that corresponds to a partitioning the even number of $2k$ colored edges into $k$ pairs) to the weight is defined as follows: (i) represent the edges by points on a circle (according to their cycling ordering),
(ii) connect the points that represent the paired edges with lines, (iii) count the total number of the line crossings inside the circle, (iv) choose plus/minus sign if the number is even/odd. Note, that the Ising model, the dimer model and some other planar easy models discussed in \cite{02Val,04Val,08Val}, can be represented, after a gauge transformation, in terms of a special case of the general WBG model. For example, an WBG representation for the Ising model is directly evident from the respective high-temperature expansion/series. (See \cite{83Pop} and discussion of Section \ref{subsec:Ising}.)

In this manuscript we show that
\begin{itemize}
\item The WBG model with an arbitrary choice of the edge associated weights, $w$, is equivalent to corresponding Fermion Gaussian model, referred to as the Grassman Gaussian Graphical (G$^3$) model, and defined on the same graph. With a suitable  choice of the so-called {\it complete orientations} for edge and triplets of neighboring vertices contributing definition of the G$^3$ model (and thus called in the following {\it suitable complete orientation} of the graph, or when it is not confusing simply -{\it suitable orientation}), the two models are equivalent in the sense that their partition functions are equal to each other and also equal to an explicitly defined Pfaffian of a skew-symmetric matrix of the size twice the number of edges in the graph. Obviously, computational complexity of the object is polynomial in the graph size. This main result of the manuscript is formally stated in the Theorem \ref{Theorem}.
\item Application of an arbitrary and graph-local gauge transformation (described in terms of a pair of $2\times 2$ matrices per edge orthogonal to each other) to the WBG model generates another det- easy model.
\end{itemize}

This paper builds upon, and in a certain sense extends, the classical results of Kasteleyn \cite{63Kas,67Kas} and Temperley, Fisher \cite{61TF,66Fis} developed for a planar dimer model. One key ingredient of the Kastelyan approach was the construction of edge-orientations. It was shown that for a special choice of the edge-orientations, so-called Kasteleyn edge-orientations, the partition function of the dimer model becomes a Pfaffian of a skew symmetric matrix built from element-by-element product of the edge weights and the $\pm1$ Kasteleyn edge-orientation factors. As shown in \cite{63Kas,67Kas}, construction of the Kasteleyn orientation of a planar graph can be done efficiently in the number of steps linear in the size of the system. (See also Section 8.3 of \cite{86LP}.) Our scheme also requires defining graph-local orientations,  however of an extended character,  where not only edge-orientations,  but also triplet-orientations are needed. The combination of the edge-orientations and triplet-orientations defined on a given graph will be called {\it complete orientations}. A special choice of the {\it complete orientation}, the {\it suitable complete orientation}, in a sense generalizes the notion of the Kasteleyn edge-orientation. See Section \ref{sec:Def} for accurate definitions.

Another important graph-related structure utilized in this paper is the so-called
{\it even generalized loops}, decomposed into combinations of {\it embedded orbits} on the graph. For any finite graph these graph objects are finite, which allows to represent  relevant partition functions in terms of a sum over a finite number of contributions associated with embedded orbits. At this point it is also appropriate to refer to a related, yet different, approach that operates with (random) walks and associated immersed orbits, that has been  formulated for the Ising model in the classical papers of Kac, Ward \cite{52KW} and Vdovichenko \cite{65Vdo}. ( See also \cite{87Mor} built upon \cite{52KW,65Vdo} and describing how to use the Kac, Ward-Vdovichenko method to solve the dimer model.) In contrary with the case of the {\it embedded orbits}, which play a key role in our manuscript, the number of {\it immersed orbits} is infinite even on a finite graph (since the length of the immersed orbits is not restricted), and thus the full partition function for a finite graph (which is also equal to a finite determinant) is represented in terms of an infinite product (not sum!) over the equivalence classes of the walks over the cyclic permutations. In this context  the language of the Grassman/Fermion models extensively used in our mansucript is more appropriate for the finite graphical objects, i.e. embedded rather than more general immersed orbits, because of the nilpotent feature of the Grassmann variables.

We would also like to note that in this manuscript we did not give exhaustive description to the full class of  det-easy problems on a given planar graph but described its rather broad, but still incomplete, subset. This incompleteness is in fact illustrated on a ``counter-example"  of the so-called $\#X$-matching model of Valiant \cite{02Val,04Val,08Val}, which is reducible to a det-easy model, however, via a set of more general (extended alphabet) gauge transformations, rather than binary gauge transformations discussed in our manuscript.

The manuscript is organized as follows. Section \ref{sec:Def} starts with a brief description of the key graphical objects used through out the manuscript and then it is split into three Subsections.
Sections \ref{subsec:WBG_def} and \ref{subsec:G3_def} define the Wick Binary Model (WBG) model and the Grassmann Gaussian Graphical (G$^3$) model respectively,  while Section \ref{subsec:theorem} states the main result of the manuscript concerning relation between the two models under the proper, so-called {\it suitable},  choice of orientations entering definition of the G$^3$ model. The respective Theorem is proved in Section \ref{sec:WBG=G3} in a combinatorial way, while discussion of alternative topological considerations is given in Appendix \ref{sec:topology}. Section \ref{sec:suitable} is devoted to constructing efficiently a suitable orientation for a given planar graph. The construction of a suitable orientation is split into two steps.  First step, discussed in Section \ref{subsec:extended-induced}, describes a procedure of building an extended graph, defining a Kasteleyn orientation on it and then generating an {\it induced} orientation on the original graph. On the second step,  described in Section \ref{subsec:induced-suitable}, we show that the {\it induced} orientation is a {\it suitable} orientation, thus  resulting overall in a simple algorithm for evaluating the partition function of any  WBG model on the planar graph.
Section \ref{sec:gauge} discusses some details and examples of the gauge transformations, which transforms a WBG model to other det-easy models. The Section is broken in four Subsections. Section \ref{subsec:gauge-reminder}, based on materials from \cite{06CCa,06CCb,08CCT,08CCa,08CCb}, contains a brief review of the gauge transformation procedure, as well as examples that illustrate how the gauge transformations reduce the dimer model, ice model and Ising model on a planar graph to the general WBG models, given in Sections \ref{subsec:dimer-example},\ref{subsec:ice},\ref{subsec:Ising}
respectively. The aforementioned ``counter-example", which is det-easy yet requires a more general reduction procedure, is discussed in Section \ref{subsec:X-matching}. An alternative generalization of the gauge transformation approach is discussed in
Appendix \ref{sec:gauge-extend}. Section \ref{sec:concl} concludes the manuscript with a brief summary and a discussion of future challenges.

\section{Key Objects, Models and Statements}
\label{sec:Def}

We start describing briefly some common definitions used in this Section but also through out the paper.

Consider a finite connected non-oriented graph, ${\cal G}=({\cal G}_{0},{\cal G}_{1})$, with ${\cal G}_{0}$ and ${\cal G}_{1}\subset \{\alpha\in 2^{{\cal G}_{0}}|{\rm card}(\alpha)=2\}$ being the (finite) sets of graph nodes/vertices and (undirected) edges/links, respectively. For $a,b\in {\cal G}_{0}$ we will often use a notation $a\sim b$ to describe the relation $\{a,b\}\in {\cal G}_{1}$, i.e., ``$a$ and $b$ are connected by an edge''. A directed edge (that goes from node $a\in {\cal G}_{0}$ to node $b\in {\cal G}_{0}$) is an ordered pair $(a,b)\in{\cal G}_{0}\times {\cal G}_{0}$ with $\{a,b\}\in {\cal G}_{1}$. There are exactly two directed edges $(a,b)$ and $(b,a)$ that represent (are associated with) a non-directed edge $\{a,b\}\in {\cal G}_{1}$. A triplet $a\to b\to c$ is an ordered set $(a,b,c)\in {\cal G}_{0}\times {\cal G}_{0}\times {\cal G}_{0}$ with $\{a,b\},\{b,c\}\in {\cal G}_{1}$ and $a\ne c$. With a minimal abuse we will use a convenient notation $(a,b)\in {\cal G}_{1}$ or even $(a,b)\in {\cal G}$ to describe the relation $\{a,b\}\in {\cal G}_{1}$, and $(a\to b\to c)\in {\cal G}_{1}$ or $(a\to b\to c)\in {\cal G}$ to describe the relation ``$a\to b\to c$ is a triplet in ${\cal G}$''.

{\bf All graphs discussed in this manuscript are planar}, i.e., they can be drawn on the plane in such a way that their edges intersect only at the endpoints.

Let $\delta_{a}({\cal G})$ be the set of vertices which are neighbors of vertex $a$ over graph ${\cal G}$,  and $|\delta_{a}({\cal G})|$ be the cardinality of the set,  which we will also call degree/valence of $a$ in ${\cal G}$.
The {\it graph degree/valence} of ${\cal G}$ is the maximal $|\delta_{a}({\cal G})|$ among all the vertices of the graph. A graph is called {\it homogeneous} if all its vertices have the same valence. A connected graph ${\cal G}$ is called {\it irreducible} if after withdrawing any single edge the graph stays connected; it is called {\it reducible}, otherwise. Starting with a reducible graph and withdrawing one-by-one the edges that increase the number of connected components we arrive at a disjoint union of irreducible graphs, referred to as the {\it irreducible components} of ${\cal G}$. The edges withdrawn in the above procedure are referred to as the {\it connecting edges}. Note that if we contract the irreducible components of ${\cal G}$ to points, the resulting graph will become a tree, whose edges are represented by the connecting edges of ${\cal G}$. Also note that an irreducible graph does not have vertices of degree one. A {\it linear graph} ${\cal L}$ of length $l$ is a graph with the set of vertices and edges being ${\cal L}_0=\{a_{0},\ldots,a_{l}\}$ and ${\cal L}_1=\{\{a_{0},a_{1}\},\ldots,\{a_{l-1},a_{l}\}\}$ respectively. A {\it circular graph} ${\cal S}$ of length $l$ has the following set of vertices and edges: ${\cal S}_{0}=\{a_{1},\ldots,a_{l}\}$ and ${\cal S}_{1}=\{\{a_{1},a_{2}\},\ldots,\{a_{l-1},a_{l}\},\{a_{l},a_{1}\}\}$. For an irreducible graph ${\cal G}$ of degree three or higher a {\it linear segment} is a linear subgraph ${\cal L}\subset {\cal G}$ with the degree of vertices (in ${\cal G}$) $a_{1},\ldots,a_{l-1}$ being two, and degree of $a_{0},a_{l}$ exceeding two. For a  {\it circular segment} the vertices $a_{2},\ldots,a_{l}$ have degree two, whereas the degree of $a_{1}$ exceeds two. All irreducible graphs of degree two are circular. An irreducible graph of degree three does not have circular segments.

The layout of the the material in the remainder of the Section is as follows.
The Wick Binary Graphical (WBG) model on a planar graph and the auxiliary objects, e.g. even generalized loops and intersection index, required for the model definition are introduced in Section \ref{subsec:WBG_def}. Then in Section \ref{subsec:G3_def} we proceed to defining the Grassmann Gaussian Graphical (G$^3$) model and respective auxiliary graph objects, the complete (combined edge- and triplet-) orientation. Finally, in Section \ref{subsec:theorem} we define the notion of {\it suitable complete orientation}
and state the main result of the manuscript describing relation between the WBG and the G$^3$ models defined on the same graph.

\subsection{Wick Binary Graphical Model}
\label{subsec:WBG_def}

Consider a graph of an arbitrary degree and introduce the following notations
\begin{definition}[Generalized loop. Even Generalized loop$=\mathbb{Z}_{2}$-cycle.]
{\it Generalized loop} of ${\cal G}$ is a subgraph of ${\cal G}$ which does not contain
vertices of degree one (an empty set is also a generalized loop).
{\it Even generalized loop}, or $\mathbb{Z}_{2}$-cycle, is a subgraph with all vertices of even degree.
\end{definition}

\begin{remark}
Obviously, a simple cycle of ${\cal G}$ is also a generalized loop of ${\cal G}$. The term
$\mathbb{Z}_{2}$-cycle establishes a connection with a language common in algebraic topology.
A $\mathbb{Z}_{2}$-chain, $c\in {\cal C}_1({\cal G};\mathbb{Z}_{2})$,
is defined by a set of
$\mathbb{Z}_{2}$-numbers $c=\{c_{ab}=c_{ba}=0,1|{\{a,b\}\in{\cal G}_{1}}\}$.
We further introduce the standard boundary operator with its value at a vertex $a\in{\cal G}_0$:
$\partial({\bm c})_{a}=\sum_{b\sim a}c_{ab}  (\mbox{mod } 2)$. A chain $c$ is called a $\mathbb{Z}_{2}$-cycle, if it has no boundary $\partial({\bm c})=0$, i.e., $\forall a\in{\cal G}_0$:
$\partial({\bm c})_a=0$.
Even generalized loop, as a subgraph of ${\cal G}$ associated with a $\mathbb{Z}_2$ cycle $\gamma$, is constructed of edges with nonzero contributions, $\gamma_{ab}=1$. Therefore, in what follows we may
think of $\gamma$ as either a $\mathbb{Z}_2$ structure on the graph or as of respective subgraph.
\end{remark}

\begin{figure}
\centering \subfigure[]{\includegraphics[width=0.22\textwidth,page=4]{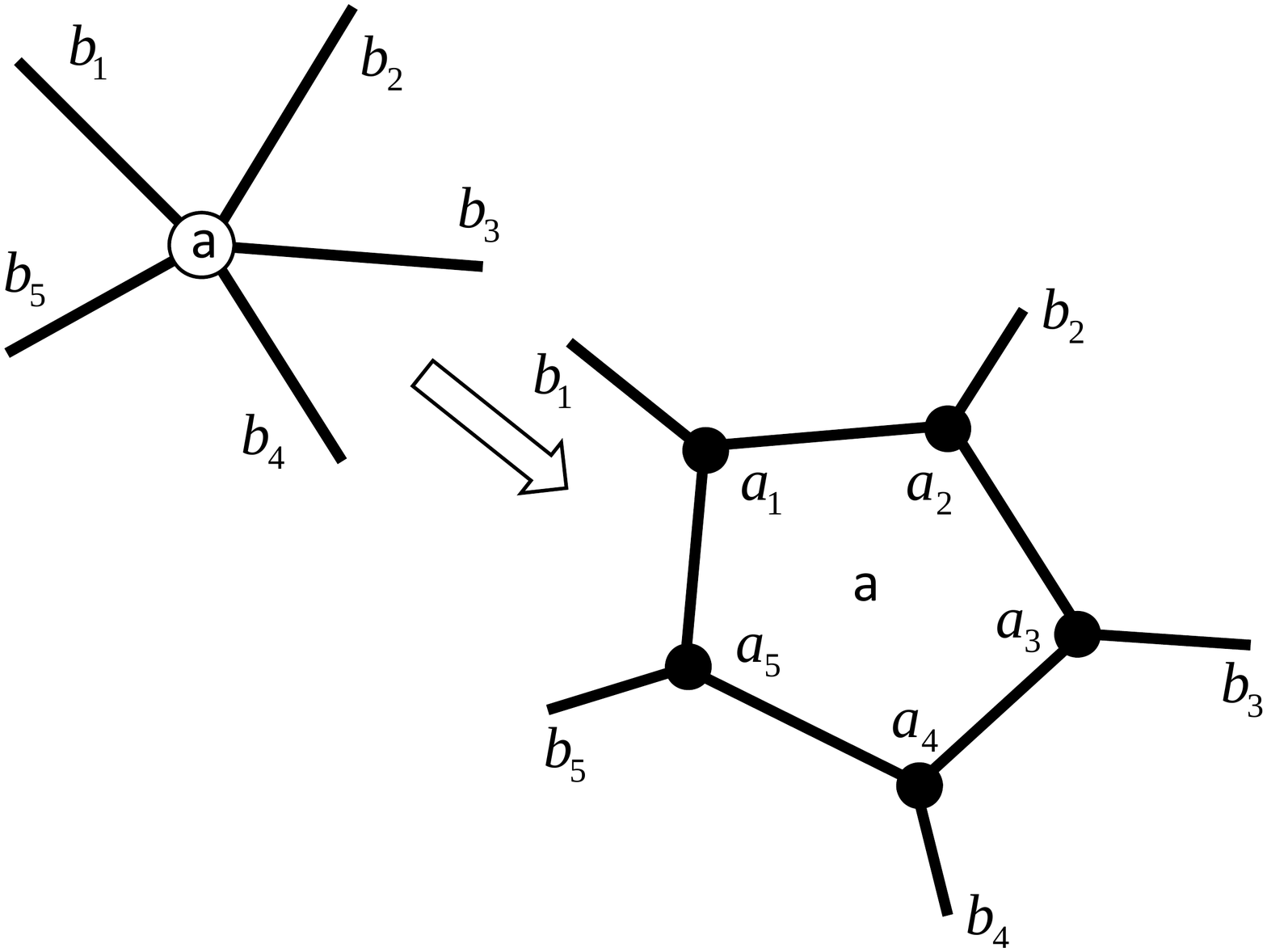}}
\subfigure[]{\includegraphics[width=0.22\textwidth,page=5]{plan-fig.pdf}}
\subfigure[]{\includegraphics[width=0.22\textwidth,page=6]{plan-fig.pdf}}
\subfigure[]{\includegraphics[width=0.22\textwidth,page=7]{plan-fig.pdf}} \caption{Illustration of the
Wick model construction. Four graph illustrates four representative terms (of the total of
$(2n-1)!!$) for different $6$-vertex contributions into Eq.~(\ref{W-expressions}), correspondent to (a)
$W_{\{[1,6],[2,5],[3,4]\}}=W_{16}W_{25}W_{34}$ [zero crossing], (b) $W_{\{[1,2],[3,5],[4,6]\}}=-W_{12}W_{35}W_{46}$ [one crossing], (c)
$W_{\{[1,3],[2,5],[4,6]\}}=W_{13}W_{25}W_{46}$ [two crossings], and (d) $W_{\{[1,4],[2,5],[3,6]\}}=-W_{14}W_{25}W_{36}$ [three
crossings], where the upper index of the $6$-vertex is dropped to simplify the notations.\label{fig:6-Wick}}
\end{figure}

We will also need the following graph/vertex-local notations.

{\bf Local partitions and intersection index.} For a finite set $A$ with even number ${\rm card}(A)=2k$ of elements let $P(A)$ be the set of all partitions of $A$ in $k$ distinct non-ordered pairs. An element $\xi\in P(A)$ can be viewed as a set of $k$ elements, represented by pairs $\{a,a'\}\in\xi$ (with $a,a'\in A$) that form the partition. We further note that, since our graph ${\cal G}$ is planar, each vertex $b$ is equipped with a cyclic ordering (we choose counterclockwise) of its adjacent vertices (neighbors) $a\sim b$. For a vertex $b$ we can place its neighbors $a\sim b$ on a circle $S^{1}\in \mathbb{R}^{2}$ standardly embedded into a plane, according to their cyclic ordering, and denoting by $I_{\alpha}=I_{\{a,c\}}$ a segment of a straight line that connects $a$ and $c$ we introduce the modulo-$2$ intersection index $I_{\alpha}\cdot I_{\alpha'}\in \mathbb{Z}_{2}$, associated with two distinct pairs $\alpha=\{a,c\}$ and $\alpha'=\{a',c'\}$, as the number of intersections of $I_{\alpha}$ with $I_{\alpha'}$. Obviously we can associate with a partition $\xi\in P(\{a_{1},\ldots,a_{2k}\})$ of an even subset of neighbors of $b$ a $\mathbb{Z}_{2}$ intersection index by
\begin{eqnarray}
\label{intersect-index-node} N_{b}(\{a_{1},\ldots,a_{2k}\};\xi)=\sum_{\{\alpha,\alpha'\}\in 2^{\xi}}^{\alpha\ne \alpha'}I_{\alpha}\cdot I_{\alpha'} \; ({\rm mod}\; 2).
\end{eqnarray}


We are now ready to define one of our basic models.
A binary statistical model on the graph is defined in terms of its partition function, describing a weighted sum over all allowed configurations.

{\bf WBG model}. The partition function of the Wick Binary Graphical (WBG) model, dependent on the vector of weights ${\bm W}$, is given by
\begin{eqnarray}
&& Z_{WBG}({\bm W})= \sum_{\gamma=\{\gamma_{ab}\}\in {\cal Z}_{1}({\cal G};{\mathbb Z}_{2})}\prod_{b\in{\cal G}_0}^{\exists a\sim b|\gamma_{ab}\neq 0}W^{(b)}_{\{a_1,\cdots, a_{2k}\}\equiv\{a|a\sim b;\gamma_{ab}=1\}},
\label{Z_WBG}\\
&&
\label{W-expressions} W_{\{a_{1},\cdots, a_{2k}\}}^{(b)}\equiv\sum_{\xi\in
P(\{a_{1},\ldots,a_{2k}\})} W_{\xi,a_1\cdots a_{2k}}^{(b)},\quad W_{\xi,a_1\cdots a_{2k}}^{(b)}\equiv
(-1)^{N_{b}(\{a_{1},\ldots,a_{2k}\};\xi)}\prod_{\alpha\in\xi}W_{\alpha}^{(b)},
\end{eqnarray}
where in the first line the summation goes over all $\mathbb{Z}_{2}$-cycles, i.e. even generalized loops (contributions associated with all odd generalized loops are assumed zero), of ${\cal G}$, and subsequent product is over all vertices of nonzero degree contributing the even generalized loop. The second line in Eq.~(\ref{W-expressions})  represents explicit expression of the higher vertex weight with $k\ge 2$ in terms of the lowest vertex weights $W_{a_{1}a_{2}}^{(b)}$, where the latter represents the case $k=1$.
\begin{remark}
(a) Due to symmetric nature of the weights we can write $W_{\alpha}^{(b)}=W_{\{a_{i},a_{j}\}}^{(b)}=W_{a_{i}a_{j}}^{(b)}=W_{a_{j}a_{i}}^{(b)}$ for $\alpha=\{a_{i},a_{j}\}$.

\noindent (b) Overall, the expressions given by Eq.~(\ref{W-expressions}) can be interpreted as follows. Each contribution $W_{\xi,a_{1}\ldots a_{2k}}^{(b)}$ to the sum is determined by a partition of the set $\{a_{1},\ldots a_{2k}\}$ into distinct pairs $\alpha_{1},\ldots\alpha_{k}$. The contribution is given by the product $W_{\alpha_{1}}^{(b)}\ldots W_{\alpha_{k}}^{(b)}$ with the overall sign in
front to be plus or minus, depending whether the total number of intersections between the segments $I_{\alpha_{i}}$ is even or odd. To make the construction, required to define the Wick model, transparent we illustrate it for $|\delta_{\gamma}(a)|/2=3$ in Fig.~\ref{fig:6-Wick}.

\noindent (c) We call the model of (\ref{Z_WBG},\ref{W-expressions}) Wick model to emphasize relation to the Wick theorem/rules used in the quantum field theory to express high-order correlation function via second-moments (covariances) in the case of Gaussian statistics. More specifically, the $(2k)$-th order contribution $W_{\{a_{1},\cdots, a_{2k}\}}^{(b)}$ to the weight function is given by the Pfaffian of a $(2k)\times (2k)$ skew-symmetric matrix, with the matrix elements given by the binary weights $W_{a_{i}a_{j}}^{(b)}$, that can be viewed as a Gaussian integral over Grassman variables.

\noindent (d) The introduction of the set of constraints given by Eq.~(\ref{W-expressions}) is motivated by our desire to establish term-by-term relation between the yet to be defined  G$^3$ model and the discrete model (\ref{Z_WBG}).

\end{remark}

\subsection{Orientations and the Gaussian Grassmann Graphical Model}
\label{subsec:G3_def}
As in the preceding Subsection we will start introducing key objects (orientations) required to define the G$^3$ model.

\begin{definition}[Edge-orientation.]
\label{def:edge-orientation} A graph edge-orientation ${\bm\sigma}$ is a particular choice of a directed representative $(a,b)$ for any undirected edge $\{a,b\}\in {\cal G}_{1}$, or equivalently a set of numbers $\sigma_{ab}=\pm 1$ associated with directed edges $(a,b)$ with the skew-symmetric condition $\sigma_{ab}=-\sigma_{ba}$.
\end{definition}

\begin{definition}[Triplet-orientation. Complete orientation.]
\label{def:triplet-complete-orientation} A graph triplet-orientation ${\bm\varsigma}$ is a set of numbers $\varsigma_{ac}^{(b)}=\pm 1$, associated with the triplets $a\to b\to c$ with $\varsigma_{ac}^{(b)}=-\varsigma_{ca}^{(b)}$. A set $({\bm\sigma},{\bm\varsigma})$, with ${\bm\sigma}$ being a graph edge orientation (see Definition~\ref{def:edge-orientation}) is called a complete orientation
\footnote{In the following we will use the term {\it orientation} to describe {\it complete orientation}.}.
\end{definition}
\begin{remark} Note that there are no triplets $a\to b\to c$, associated with the nodes $b\in{\cal G}_{0}$ of valence (degree) one. Therefore, the triplet-orientation components $\varsigma_{ac}^{(b)}$, associated with the nodes $b$ of degree one are represented by an empty set.
\end{remark}

{\bf G$^3$ model.} The partition function of the G$^3$ model, dependent on a complete orientation of the graph, $({\bm \sigma},{\bm \varsigma})$, and also dependent on the vector of weights ${\bm W}$, is defined in terms of the following integral (sum) over the Grassmann anti-commuting, variables $\varphi$:
\begin{eqnarray}
\label{VG3} Z_{\scriptsize\mbox{G}^3}({\bm \sigma},{\bm \varsigma};{\bm W})&=& \frac{\int
\exp\left(\frac{1}{2}\sum\limits_{(b\to a\to c)\in{\cal
G}_1}\varphi_{ab}\varsigma^{(a)}_{bc}W^{(a)}_{bc}\varphi_{ac}\right)
\exp\left(\frac{1}{2}\sum_{(a,b)\in{\cal G}_1}\varphi_{ab}\sigma_{ab}\varphi_{ba}\right)
\prod\limits_{(a,b)}d\varphi_{ab}}{\int \exp\left(\frac{1}{2}\sum_{(a,b)\in{\cal
G}_1}\varphi_{ab}\sigma_{ab}\varphi_{ba}\right) \prod\limits_{(a,b)}d\varphi_{ab}},
\end{eqnarray}
where  $\forall (a,b),(c,d)\in{\cal G}_1\quad
\varphi_{ab}\varphi_{cd}=-\varphi_{cd}\varphi_{ab}$. There are two independent Grassmann variables per edge (one per directed edge), and the Berezin/Grassman integrals in Eq.~(\ref{VG3}) are defined according to
standard Berezin rules, $\forall (a,b)\in{\cal G}_1:\quad \int d\varphi_{ab}=0,\quad
\int\varphi_{ab} d\varphi_{ab}=1$.

\begin{remark}
(a) The partition function of the G$^3$ model can also be interpreted as a result of the following
Gaussian statistical average (expectation value) over the Grassmann variables
\begin{eqnarray}
Z_{\scriptsize\mbox{G}^3}({\bm \sigma},{\bm \varsigma};{\bm W})= \left\langle\exp\left(\frac{1}{2}\sum\limits_{(b\to a\to c)\in{\cal
G}_1}\varphi_{ab}\varsigma^{(a)}_{bc}W^{(a)}_{bc}\varphi_{ac}\right)\right\rangle.\nonumber
\end{eqnarray}

\noindent
(b) The denominator in Eq.~(\ref{VG3}), which is $\pm 1$ by construction, is introduced for convenience to enforce the normalization condition,
$Z_{\scriptsize\mbox{G}^3}({\bm\varsigma};{\bm \sigma};{\bm 0})=1$.

\noindent (c) Significance of the G$^3$ model, as of any other Gaussian Grassman model, is in the fact that
its partition/generating function is a Pfaffian. Indeed,  the numerator in Eq.~(\ref{VG3}) is the
Pfaffian of  the skew-symmetric matrix (of the size twice the number of edges in the graph) with the
following elements
\begin{eqnarray}
H_{ij}=\left\{\begin{array}{cc}
\varsigma^{(a)}_{bc}W^{(a)}_{bc}, & i=(a,b)\ \&\ j=(a,c),\mbox{ where } b\neq c\sim a,\\
\sigma_{ab},& i=(a,b),\ \& \ j=(b,a).
\end{array}\right.
\label{H-matrix}
\end{eqnarray}
Thus calculating the partition function of the G$^3$ model is easy, as requiring at most $O(|{\cal G}_1|^3)$ operations.

\noindent (d) When the graph ${\cal G}$ is a tree, the numerator in Eq.~(\ref{VG3}) does not depend on ${\bm W}$, and therefore the partition function $Z_{\scriptsize\mbox{G}^3}({\bm \sigma},{\bm \varsigma};{\bm W})=1$ is trivial. Nontrivial dependence of the partition function on ${\bm W}$ appears due to the loops in the graph, as will be discussed in details later in the manuscript.
\end{remark}

\subsection{Equivalence between the WBG model and the G$^3$ model}
\label{subsec:theorem}
The main subject of this Subsection (and also of the manuscript) will consist in establishing equivalence between the WBG model and the G$^3$ model for a given (arbitrary) set of edge-weights, ${\bm W}$.
However,  the partition function of the former model depends of ${\bm W}$ only, while the partition function of the later model depends on both the
edge-weight vector ${\bm W}$ and on a complete orientation $({\bm \sigma},{\bm \varsigma})$ of the graph. Therefore, the aforementioned equivalence between the models is established only for some special choice(s) of the complete orientation which is detailed in some number of forthcoming definitions preceding the main theorem/statement.

\begin{definition}{\bf (Closed, immersed and embedded walk. Immersed and embedded orbit. Oriented edges and triplets of the walk/orbit.)}
\label{def:walk} A {\it closed walk} $C$ of length $l(C)$ on a graph ${\cal G}$ (not necessarily planar) is an ordered set $C=\left(c_{0},\ldots,c_{n}\right)$ of nodes $c_{j}\in {\cal G}_{0}$,  such that  $c_j$ and $c_{j+1}$ are adjacent for all $j=0,\cdots,n-1$, and $c_{n}=c_{0}$.
A closed walk is called {\it immersed} if $c_{j-1}\ne c_{j+1}, \; \forall j=1,\ldots,n$, i.e., the walk does not
involve backtracking events (hereafter we use a natural agreement $c_{n+1}=c_{1}$). A closed walk is called {\it embedded} if $c_{j}\ne c_{k}, \; \forall
j\ne k$, i.e., the path does not have self-intersections. An equivalence class of closed walks with
respect to the cyclic permutations of the path nodes is called an orbit. An {\it
immersed orbit} and an {\it embedded orbit} is an equivalence class, represented by an immersed closed walk and an embedded closed walk, respectively. The ordered pairs $(c_{j-1},c_{j})$ and triplets $(c_{j-1},c_{j},c_{j+1})$ for $j=1,\ldots,n$ are called the oriented edges and triplets of $C$, respectively.
\end{definition}
\begin{remark}
\label{Rem-immersed} 
Obviously any embedded closed walk (or orbit) is immersed. The definition (\ref{def:walk})
assumes that orbits are oriented, however we will also be discussing undirected orbits as equivalence classes of orbits with respect to orientation change, sometimes
without additional clarification (when it does not cause a confusion). An undirected embedded orbit
is synonymous to a simple cycle. The terms {\it immersed} and {\it embedded} are borrowed from
topology and are used here based on the fact that immersed and embedded closed walks can be viewed
as discrete counterparts of the loops, immersed and embedded, respectively into a plane. An
immersed loop into a plane, or equivalently an immersion $S^{1}\looparrowright \mathbb{R}^{2}$ of a
circle into a plane is a smooth closed trajectory with everywhere non-zero velocity. This analogy
is detailed in Appendix \ref{sec:topology} and applied to unveil the topology that stands behind
the equivalence of the WBG and G$^3$ models.
\end{remark}

{\bf Sign of an immersed orbit, associated with a given complete orientation.} Given a complete graph orientation
$({\bm\sigma},{\bm\varsigma})$, we associate a sign $\varepsilon(C)=\varepsilon_{{\bm\sigma},{\bm\varsigma}}(C)=\pm 1$ with any immersed
orbit $C$ by
\begin{equation}
\label{define-epsilon} \varepsilon(C)=\left(\prod\limits_{(b\to a\to c)\in
C}\varsigma_{bc}^{(a)}\right)\prod\limits_{(a,b)\in C}\sigma_{ab}=
\prod_{j=0}^{l(C)-1}\varsigma_{c_{j-1}c_{j+1}}^{(c_{j})}\prod_{j=0}^{l(C)-1}\sigma_{c_{j}c_{j+1}}.
\end{equation}
\begin{remark}
\label{Rem-1} Note that the binary function $\varepsilon$ is obviously invariant with respect to
cyclic permutations of $C$, therefore, the signs $\varepsilon(C)$ are defined for oriented immersed
orbits. A change of orientation changes the sign of each factor in the products in the rhs of
Eq.~(\ref{define-epsilon}), and since the number of the $\sigma$-factors is the same as the number
of $\varsigma$-factors, $\varepsilon(C)$ is invariant with respect to the orientation changes.
Therefore, $\varepsilon$ is actually correctly defined for a non-oriented immersed orbit as well.
\end{remark}

{\bf Decomposition of a cycle.} A generic $\mathbb{Z}_{2}$-cycle can be decomposed into a set of immersed orbits in a variety of ways. It is easy to realize that each decomposition can be labeled by a set of variables ${\bm\xi}=\{\xi_{a}\}_{a\in {\cal G}_{0}}$, referred to as {\it partition data} where $\xi_{a}$ for $a\in \gamma$ denotes a partition of the set of the adjacent edges $\{a,b\}\in \gamma$ (or equivalently adjacent nodes $b\in \gamma$) into distinct non-ordered pairs. Formally $\xi_{a}$ can be viewed as a set whose elements are given by the aforementioned distinct pairs. The partition data ${\bm\xi}$ naturally provides a set $\{C_{1}(\gamma,{\bm\xi}),\ldots,C_{n(\gamma,{\bm\xi})}(\gamma,{\bm\xi})\}$ of immersed orbits that decompose $\gamma$. The partition variables $\xi_{a}$ have been already used earlier to express the higher vertex weights of a Wick binary model in terms of the lowest vertex weights, see Eqs.~(\ref{Z_WBG}) and
(\ref{W-expressions}).
\begin{definition}{\bf (Decomposed cycle. Decomposition Components. Total self-intersection index of a decomposed cycle.)}
\label{def:decomposed cycle} A pair $(\gamma,{\bm\xi})$, where $\gamma$ is a $\mathbb{Z}_{2}$-cycle and ${\bm\xi}$ is the partition data, associated with its vertices is called a decomposed cycle. The set $\{C_{1}(\gamma,{\bm\xi}),\ldots,C_{n(\gamma,{\bm\xi})}(\gamma,{\bm\xi})\}$ is called a decomposition of $\gamma$ associated with ${\bm\xi}$. The immersed orbits $C_{j}(\gamma,{\bm\xi})$ are referred to as the decomposition components. The total self-intersection index of a decomposed cycle $N(\gamma,{\bm\xi})\in \mathbb{Z}_{2}$ is defined as a sum $N(\gamma,{\bm\xi})=\sum_{a\in \gamma}N_{a}(\gamma,{\bm\xi})=\sum_{a\in \gamma}\sum_{\{\alpha,\alpha'\}}^{\alpha \ne \alpha'\in\xi_{a}}I_{\alpha}\cdot
I_{\alpha'}$ over the cycle vertices, where the (elementary) intersection index $I_{\alpha}\cdot I_{\alpha'}$  was defined in Subsection~\ref{subsec:WBG_def}.
\end{definition}
\begin{remark} (a) The definition of the local intersection index $N_{a}(\gamma,{\bm\xi})$ is fully consistent with Eq.~(\ref{intersect-index-node}), since $N_{a}(\gamma,{\bm\xi})=N_{a}(\gamma_{a},\xi_{a})$, where $\gamma_{a}$ is the set of neighbors of $a$ that belong to $\gamma$.

\noindent (b) Note that the decomposition components are immersed orbits of a special type: they do not have self-intersecting and intersecting edges, and all intersections and self-intersections (if any) occur at the vertices only. In particular, the components generate $\mathbb{Z}_{2}$-cycles that will be denoted in the same way, and the decomposition can be also represented as $\gamma=\sum_{j=1}^{n(\gamma,{\bm\xi})}C_{j}(\gamma,{\bm\xi})$.
\end{remark}

The just introduced notion of a decomposed cycle allows a convenient definition of a {\it suitable} orientation.
\begin{definition}[Suitable Complete Orientation.]
\label{def:suitable} A (complete) orientation is called suitable if for any decomposed cycle $(\gamma,{\bm\xi})$ the following relation holds
\begin{eqnarray}
\label{basic-property} (-1)^{q(\gamma,{\bm\xi})}=1,
\end{eqnarray}
where the binary function $q(\gamma,{\bm\xi})$ is defined by
\begin{eqnarray}
\label{define-q}
(-1)^{q(\gamma,{\bm\xi})}=(-1)^{N(\gamma,{\bm\xi})}\prod_{j=1}^{n(\gamma,{\bm\xi})}
\left(-\varepsilon(C_{j}(\gamma,{\bm\xi}))\right).
\end{eqnarray}
\end{definition}
Note that, as discussed later in Section \ref{sec:suitable}, such suitable complete orientations do exists.

We are now in the position to state the main theorem.
\begin{theorem}
\label{Theorem} For any connected planar graph (of an arbitrary degree) the discrete-variable Wick model
(\ref{Z_WBG}) and fermion model (\ref{VG3}) are term-by-term, i.e. fully,  equivalent to each
other, i.e., $Z_{\scriptsize\mbox{G}^3}({\bm\varsigma},{\bm \sigma};{\bm W})=Z_{\scriptsize
\mbox{WBG}}({\bm W})$, for any ${\bm W}$, provided $({\bm\sigma},{\bm\varsigma})$ is a suitable orientation of ${\cal G}$.
\end{theorem}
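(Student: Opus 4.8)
The plan is to expand both partition functions as sums indexed by decomposed cycles $(\gamma,{\bm\xi})$ and to match them term by term. On the WBG side this is purely formal: substituting the Wick rule (\ref{W-expressions}) into (\ref{Z_WBG}) and expanding every vertex weight yields
\[
Z_{\scriptsize\mbox{WBG}}({\bm W})=\sum_{\gamma\in{\cal Z}_{1}({\cal G};{\mathbb Z}_{2})}\ \sum_{{\bm\xi}}\ (-1)^{N(\gamma,{\bm\xi})}\prod_{a\in\gamma}\prod_{\alpha\in\xi_{a}}W_{\alpha}^{(a)},
\]
where the inner sum runs over all partition data attached to $\gamma$ and $N(\gamma,{\bm\xi})=\sum_{a}N_{a}(\gamma,{\bm\xi})$ as in Definition~\ref{def:decomposed cycle}. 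On the G$^3$ side, by Remark~(c) after (\ref{VG3}) the numerator of (\ref{VG3}) is $\mbox{Pf}(H)$ with $H$ from (\ref{H-matrix}) and the denominator is $\mbox{Pf}(H_{0})=\pm1$, $H_{0}$ being the block-diagonal ``edge'' part. Expanding $\mbox{Pf}(H)$ over perfect matchings $M$ of the $2|{\cal G}_{1}|$ directed edges, a nonvanishing term matches each directed edge $(a,b)$ either to its reverse $(b,a)$ (an entry $\sigma_{ab}$) or to another directed edge $(a,c)$ issuing from $a$ (an entry $\varsigma_{bc}^{(a)}W_{bc}^{(a)}$); recording the undirected edges whose two orientations are \emph{not} paired to each other, together with the induced partition of those edges at every vertex, produces precisely a decomposed cycle $(\gamma,{\bm\xi})$, and this is a bijection $M\leftrightarrow(\gamma,{\bm\xi})$ with $M_{0}\leftrightarrow\gamma=\emptyset$.

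Next I would fix a decomposed cycle $(\gamma,{\bm\xi})$ and compute its contribution to $Z_{\scriptsize\mbox{G}^3}$. Dividing the product of matrix entries along $M_{(\gamma,{\bm\xi})}$ by $\mbox{Pf}(H_{0})$ and using $\sigma_{ab}^{2}=1$, all $\sigma$'s outside $\gamma$ cancel, the weights collapse to $\prod_{a}\prod_{\alpha\in\xi_{a}}W_{\alpha}^{(a)}$, and one is left with a pure sign. The claim I would establish is that this sign factorizes over the decomposition components and that the $j$-th component contributes exactly $-\varepsilon(C_{j}(\gamma,{\bm\xi}))$, so that
\[
\mbox{(contribution of }(\gamma,{\bm\xi})\mbox{ to }Z_{\scriptsize\mbox{G}^3})=\prod_{j=1}^{n(\gamma,{\bm\xi})}\bigl(-\varepsilon(C_{j}(\gamma,{\bm\xi}))\bigr)\,\prod_{a}\prod_{\alpha\in\xi_{a}}W_{\alpha}^{(a)}.
\]
To see this, order the directed edges component by component (legitimate because distinct components occupy disjoint sets of directed edges, even when they share graph vertices) and, within each component, along the orbit; then the Wick/Berezin average factorizes over components without sign, and for a single orbit $C$ the forced pairing of each edge's two orientations is, in this order, the non-crossing ``rotation'' pairing, whose Pfaffian sign is $+1$, while the product of the propagators $\langle\varphi_{xy}\varphi_{yx}\rangle_{0}=\sigma_{xy}$ around $C$ differs from $\prod_{(x,y)\in C}\sigma_{xy}$ in exactly one ``wrap-around'' edge, producing the extra factor $-1$. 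Collecting the $\varsigma$-factors coming from the matrix entries with these $\sigma$-factors and the reversal sign reproduces $-\varepsilon(C)$ by (\ref{define-epsilon}) and the orientation-invariance of $\varepsilon$ (Remark~\ref{Rem-1}). This ``$-1$ per fermion loop'' is the analogue, in the present setting, of the Kasteleyn sign computation for planar dimers.

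Finally I would compare the two term-by-term expansions: the contributions of $(\gamma,{\bm\xi})$ to $Z_{\scriptsize\mbox{G}^3}$ and to $Z_{\scriptsize\mbox{WBG}}$ agree iff $\prod_{j}(-\varepsilon(C_{j}))=(-1)^{N(\gamma,{\bm\xi})}$, equivalently iff $(-1)^{N(\gamma,{\bm\xi})}\prod_{j}(-\varepsilon(C_{j}))=1$, which by (\ref{define-q}) is exactly the condition $(-1)^{q(\gamma,{\bm\xi})}=1$ defining a suitable orientation (Definition~\ref{def:suitable}). Since this holds for every decomposed cycle, summing over ${\bm\xi}$ at fixed $\gamma$ re-assembles $\prod_{b}W_{\gamma_{b}}^{(b)}$ through the Wick rule (\ref{W-expressions}), and summing over $\gamma\in{\cal Z}_{1}({\cal G};{\mathbb Z}_{2})$ gives $Z_{\scriptsize\mbox{G}^3}({\bm\sigma},{\bm\varsigma};{\bm W})=Z_{\scriptsize\mbox{WBG}}({\bm W})$, with equality already holding term by term in $\gamma$ (indeed in $(\gamma,{\bm\xi})$). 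The main obstacle is the sign analysis of the second paragraph --- the sign-free factorization of the Berezin average over decomposition components (checking that cross-component Grassmann reorderings contribute nothing), the evaluation of the Pfaffian sign of the ``rotation'' pairing, and the careful treatment of self-intersecting and short components --- since this is precisely where the planarity of ${\cal G}$ and the $\mathbb{Z}_{2}$ intersection bookkeeping underlying the notion of a suitable orientation enter. An equivalent route works directly with the Grassmann integral (\ref{VG3}): nilpotency of the $\varphi$'s forces the decomposed-cycle structure, and the familiar $-1$ per fermion loop replaces the Pfaffian sign lemma.
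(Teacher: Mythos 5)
Your proposal is correct and follows essentially the same route as the paper: both expand $Z_{\scriptsize\mbox{WBG}}$ and $Z_{\scriptsize\mbox{G}^3}$ as sums over decomposed cycles $(\gamma,{\bm\xi})$, identify the per-orbit sign $-\varepsilon(C_{j})$ on the fermionic side against $(-1)^{N(\gamma,{\bm\xi})}$ on the Wick side, and conclude by the defining property (\ref{define-q})--(\ref{basic-property}) of a suitable orientation. Your phrasing via the perfect-matching expansion of $\mbox{Pf}(H)$ is just the Wick/Berezin computation the paper performs on the exponential in (\ref{VG3}), as you yourself note, and your sign sketch is at the same level of detail as the paper's.
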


\section{Proof of the WBG and G$^3$ equivalence}
\label{sec:WBG=G3}

The main task of this Section is to sketch the proof of the Theorem \ref{Theorem}. We start by making the following (almost obvious) statement that reduces the proof of Theorem \ref{Theorem} to the case of irreducible graphs ${\cal G}$.
\begin{proposition}
\label{prop:gen->irreducible}
(i) The partition functions $Z_{\scriptsize \mbox{WBG}}({\bm W})$ and $Z_{\scriptsize\mbox{G}^3}({\bm\varsigma},{\bm \sigma};{\bm W})$ of the WBG and G$^3$ models both satisfy the factorization property: the partition function for ${\cal G}$ is given by the product of the partition functions of the corresponding models restricted to the irreducible components of ${\cal G}$. (ii) An orientation $({\bm\sigma},{\bm\varsigma})$ on ${\cal G}$ is suitable if and only if its restriction to any irreducible component of ${\cal G}$ is suitable.
\end{proposition}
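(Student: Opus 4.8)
The plan is to reduce the whole statement to one structural fact: a connecting edge of ${\cal G}$ is a bridge, and a bridge receives coefficient $0$ in every $\mathbb{Z}_{2}$-cycle of ${\cal G}$. First I would establish this by a parity count. If $\gamma$ is a $\mathbb{Z}_{2}$-cycle with $\gamma_{e}=1$ for a bridge $e=\{a,b\}$, remove $e$ to split ${\cal G}$ into the two parts $A\ni a$ and $B\ni b$, and compute $S=\sum_{v\in A}\partial({\bm\gamma})_{v}\ (\mathrm{mod}\ 2)$. Since $\gamma$ is a cycle, $S=0$; on the other hand, in $S$ every $\gamma$-edge with both endpoints in $A$ is counted twice while the only edge of ${\cal G}$ joining $A$ to $B$, namely $e$, is counted once, so $S\equiv\gamma_{e}=1$, a contradiction. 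Hence every $\mathbb{Z}_{2}$-cycle $\gamma$ is supported on the edges of the irreducible components ${\cal G}^{(i)}$ and splits as $\gamma=\bigsqcup_{i}\gamma^{(i)}$ with $\gamma^{(i)}$ its restriction to ${\cal G}^{(i)}$; moreover, since the components are pairwise vertex-disjoint, the partition data ${\bm\xi}$, the decomposition components $C_{j}(\gamma,{\bm\xi})$, and all local intersection indices $N_{a}$ are confined to single components (dropping the bridge-neighbours from the cyclic circle at a vertex does not change whether two chords among the remaining neighbours cross, so $N_{a}$ is the same computed in ${\cal G}$ or in the component).

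For part (i), the WBG case is then immediate: the sum over cycles in (\ref{Z_WBG}) factors into a product of sums over the $\gamma^{(i)}$, every vertex incident to an active edge --- together with all its non-bridge neighbours --- lies in a single ${\cal G}^{(i)}$, and by the previous paragraph its weight factor $W^{(b)}_{\{\dots\}}$ coincides with the one computed inside that component; hence $Z_{\scriptsize\mbox{WBG}}({\cal G})=\prod_{i}Z_{\scriptsize\mbox{WBG}}({\cal G}^{(i)})$. For the G$^{3}$ model I would work with the Pfaffian representation (\ref{H-matrix}). Expanding $\mathrm{Pf}(H)$ over perfect matchings of the set of directed edges, in any nonzero term each directed edge $(a,b)$ is paired either with $(b,a)$ (``inert'', weight $\sigma_{ab}$) or with some $(a,c)$ at $a$ (``active'', weight $\varsigma^{(a)}_{bc}W^{(a)}_{bc}$); the set of active edges of such a matching then forms a $\mathbb{Z}_{2}$-cycle, so by the bridge fact every bridge is forced to be inert. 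Ordering the directed edges as (all bridge pairs)(block $D_{1}$)(block $D_{2}$)$\cdots$, where $D_{i}$ collects the directed edges of ${\cal G}^{(i)}$, $H$ becomes block diagonal --- a $2\times2$ block $\sigma_{e}$ for each bridge plus the block $H^{(i)}$ of (\ref{H-matrix}) for each component --- so $\mathrm{Pf}(H)=\bigl(\prod_{e\ \mathrm{bridge}}\sigma_{e}\bigr)\prod_{i}\mathrm{Pf}(H^{(i)})$; the denominator (the edge-only matrix $H_{0}$) is treated identically, with the \emph{same} reordering sign, giving $\mathrm{Pf}(H_{0})=\bigl(\prod_{e\ \mathrm{bridge}}\sigma_{e}\bigr)\prod_{i}\mathrm{Pf}(H_{0}^{(i)})$, so the ratio yields $Z_{\scriptsize\mbox{G}^{3}}({\cal G})=\prod_{i}Z_{\scriptsize\mbox{G}^{3}}({\cal G}^{(i)})$.

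For part (ii) I would show that the exponent $q(\gamma,{\bm\xi})$ of (\ref{define-q}) is additive over components modulo $2$. Using $\gamma=\bigsqcup_{i}\gamma^{(i)}$ and the confinement of the $C_{j}$ and of the $N_{a}$ to single components, together with the observation that $\varepsilon(C_{j})$ and $N_{a}$ depend only on the restriction of the orientation to the relevant component, one obtains $(-1)^{q(\gamma,{\bm\xi})}=\prod_{i}(-1)^{q_{i}(\gamma^{(i)},{\bm\xi}^{(i)})}$, where $q_{i}$ is the analogue of $q$ for ${\cal G}^{(i)}$ with the restricted orientation. The ``if'' direction is then immediate. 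For ``only if'', given a decomposed cycle $(\gamma_{0},{\bm\xi}_{0})$ of one component ${\cal G}^{(i_{0})}$, regard it as a decomposed cycle of ${\cal G}$ that is empty on all other components (an empty cycle has $N=0$ and no decomposition components, so its $q$ vanishes); suitability of ${\cal G}$ then forces $(-1)^{q_{i_{0}}(\gamma_{0},{\bm\xi}_{0})}=1$, i.e.\ the restriction to ${\cal G}^{(i_{0})}$ is suitable.

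The main obstacle is purely bookkeeping: one must check that the planar cyclic order at a vertex restricts consistently, so that the indices $N_{a}$ and the orbit signs $\varepsilon(C_{j})$ are literally unchanged on passing to a component, and that in the G$^{3}$ step the global Pfaffian reordering sign is the same for $H$ and for $H_{0}$ so that it cancels in the ratio. Granting the bridge lemma, every other step is a direct consequence.
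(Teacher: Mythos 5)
Your proof is correct, and its skeleton matches the paper's: everything is reduced to the fact that a connecting edge (bridge) carries coefficient $0$ in every $\mathbb{Z}_{2}$-cycle, which the paper simply asserts ("any $\mathbb{Z}_{2}$-cycle on ${\cal G}$ is restricted to the disjoint union of its irreducible components") and from which it declares the WBG factorization and part (ii) immediate; you supply the parity proof of that bridge lemma and spell out the additivity of $q(\gamma,{\bm\xi})$ over components and the empty-extension argument for the ``only if'' direction, which is a genuine (if routine) completion of what the paper leaves implicit. Where you genuinely diverge is the G$^{3}$ factorization: the paper integrates out the bridge Grassmann pair directly in Eq.~(\ref{VG3}), writing $\exp(\varphi_{ab}\sigma_{ab}\varphi_{ba})=1+\sigma_{ab}\varphi_{ab}\varphi_{ba}$ and killing the first term by the parity of the number of integration variables on each side of the bridge, whereas you expand the Pfaffian of (\ref{H-matrix}) over perfect matchings of directed edges and invoke the bridge lemma to force every bridge to be ``inert''. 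Both work; the paper's route is shorter and needs no sign bookkeeping (the decoupling happens at the level of the integral, before any matrix ordering is chosen), while yours makes the mechanism transparent at the matching level and reuses the same $\mathbb{Z}_{2}$-cycle fact for both models, at the cost of the reordering-sign discussion. One wording caveat: $H$ does not literally become block diagonal in your ordering --- the bridge directed edges $(a,b)$ retain nonzero vertex couplings $\varsigma^{(a)}_{bc}W^{(a)}_{bc}$ to the component blocks; what your matching argument actually shows is that no nonzero Pfaffian term ever uses those entries, so $\mathrm{Pf}(H)$ equals the Pfaffian of the block-diagonal truncation. Since your own argument establishes exactly that, this is an imprecision of phrasing rather than a gap.
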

\begin{proof} For the WBG model both the factorization property (i) and the feature (ii) follow immediately from the fact that any $\mathbb{Z}_{2}$-cycle on ${\cal G}$ is restricted to the disjoint union of its irreducible components. The factorization property for the G$^{3}$ model is obtained by consecutive integrations of the numerator and denominator of Eq.~(\ref{VG3}) over the pairs $d\varphi_{ab}d\varphi_{ba}$ of Grassmann variables related to the connecting edges $\{a,b\}$. Each integration is performed by representing, $\exp(\varphi_{ab}\varphi_{ba}/2)=1+\varphi_{ab}\varphi_{ba}/2$. Substitution of the first term into the numerator/denominator of Eq.~(\ref{VG3}) results in a product of two Grassmann integrals of even functions over an odd number of integrations, which yields zero. Substitution of the second term, followed by the integration over $d\varphi_{ab}d\varphi_{ba}$ results in the product of two integrals over two unconnected graphs obtained by eliminating the connecting edge $\{a,b\}$.
\end{proof}

Now we are in the position to describe a\\
{\bf Proof of Theorem~\ref{Theorem}:} Substituting Eq.~(\ref{W-expressions}), which expresses the higher order vertex functions
in terms of the respective edge terms, into Eq.~(\ref{Z_WBG}), we obtain the following expression for the partition function in the form of a sum over
the combined variables $(\gamma,{\bm\xi})$
\begin{eqnarray}
\label{Z-W-expand} Z_{\scriptsize \mbox{WBG}}({\bm
W})=\sum_{(\gamma,{\bm\xi})}(-1)^{N(\gamma,{\bm\xi})}
\prod_{j=1}^{n(\gamma,{\bm\xi})}r(C_{j}(\gamma,{\bm\xi})), \;\;\; r(C)=\prod_{k=1}^{l(C)}W_{c_{k-1}c_{k+1}}^{(c_{k})}.
\end{eqnarray}
Eq.~(\ref{Z-W-expand}) is derived by means of re-grouping the
factors correspondent to the lowest weights $W_{ac}^{(b)}$ and keeping together terms
correspond to the same immersed orbits $C_{j}(\gamma,{\bm\xi})$ participating in the
decomposition of $\gamma$ determined by  ${\bm\xi}$. The sign factor in Eq.~(\ref{Z-W-expand})
is given by the product of the
respective signs associated with the nodes of $\gamma$, and it is thus determined by the total
number of intersections, $N(\gamma,{\bm\xi})$, introduced in Definition~(\ref{def:decomposed cycle}).

A similar to Eq.~(\ref{Z-W-expand}) expansion for the partition function of the Fermion model is
\begin{eqnarray}
\label{Z-F-expand} Z_{\scriptsize\mbox{G}^3}({\bm\varsigma},{\bm \sigma};{\bm
W})=\sum_{(\gamma,{\bm\xi})}
\prod_{j=1}^{n(\gamma,{\bm\xi})}\left(-\varepsilon(C_{j}(\gamma,{\bm\xi}))\right)
\prod_{j=1}^{n(\gamma,{\bm\xi})}r(C_{j}(\gamma,{\bm\xi})).
\end{eqnarray}
It can be rationalized as follows. We represent the exponential under the average in Eq.~(\ref{VG3}) as a product of the vertex exponentials (labeled by $a$) and expand the vertex exponentials in the natural
bilinear combinations of the Grassmann variables $\varphi$. Each term of the expansion is naturally
labeled by a set ${\bm\xi}$ of partition variables. We further compute the expectation value of
each individual term in the expansion, using the Wick's theorem. Since the two-point correlation
functions of the Grassmann variable is,
$\left\langle\varphi_{ab}\varphi_{ba}\right\rangle=\sigma_{ab}$, when $\{a,b\}\in {\cal G}_{1}$ and,
$\left\langle\varphi_{ab}\varphi_{ba}\right\rangle=0$, otherwise, the set ${\bm\xi}$ of the partition
variables provides a non-zero contribution to the partition function if and only if it satisfies
the following property: If $\{a,b\}$ participates in the local partition at node $a$, then
$\{b,a\}=\{a,b\}$ participates at the local partition at node $b$. For a partition ${\bm\xi}$ that
satisfies this property we can build the associated cycle $\gamma$ consisting of edges
$\{a,b\}$ that participate in the local partitions. We further re-group the factors
$\varphi_{ab}\varsigma_{bc}^{(a)}W_{bc}^{(a)}\varphi_{ac}$ by keeping together the terms that
correspond to the same immersed orbits $C_{j}(\gamma,{\bm\xi})$ participating in the
decomposition of $\gamma$. The decomposition is
determined by  ${\bm\xi}$, followed by evaluating the expectation values
using the Wick's theorem, in particular with making use of the form of the two-point correlation
functions. After re-grouping the factors correspondent to each immersed orbit, and combining together the
$W_{ac}^{(b)}$, $\sigma_{ab}$ and $\varsigma_{ac}^{(b)}$ factors, this results
in Eq.~(\ref{Z-F-expand}).

By Definition~\ref{def:suitable} the sign factors in all the individual contributions to the partition functions
$Z_{\scriptsize \mbox{WBG}}({\bm W})$ and
$Z_{\scriptsize\mbox{G}^3}({\bm\varsigma},{\bm \sigma};{\bm W})$ (labeled by respective suitable orientation,
$(\gamma,{\bm\xi})$) are the same, which proves the statement of the theorem.
\qed

\section{Construction of a suitable orientation}
\label{sec:suitable}

In this Section we show how to construct efficiently a {\it suitable orientation}, required to compute the partition function of the WGB model efficiently, according to the Theorem \ref{Theorem}.  The construction of the suitable orientation is split into two steps.  First step, discussed in Section \ref{subsec:extended-induced}, describes transformation of the original irreducible graph to the respective extended graph, ${\cal G}^e$, which is a homogeneous graph of degree three with even number of vertices. We also show in this Subsection how a Kasteleyn orientation of edges on the extended graph generates a so-called {\it induced orientation} on the original graph. Second step, undertaken in Section \ref{subsec:induced-suitable}, focuses on showing that an induced orientation of the original graph is always suitable, i.e. satisfies the condition of Theorem~\ref{Theorem}.

Note that Proposition~\ref{prop:gen->irreducible} reduces the problem of identifying suitable orientations to the case when ${\cal G}$ is irreducible. Any irreducible graph of degree two is a cyclic graph, for which suitable orientations are identified in an obvious way. Therefore, throughout this section the graph ${\cal G}$, on which our models are defined is assumed irreducible and of degree three or higher (even when not explicitly stated), which does not lead to any loss of generality.

\subsection{Extended Graph and Induced Orientation}
\label{subsec:extended-induced}

\begin{figure}
\includegraphics[width=0.5\textwidth,page=1]{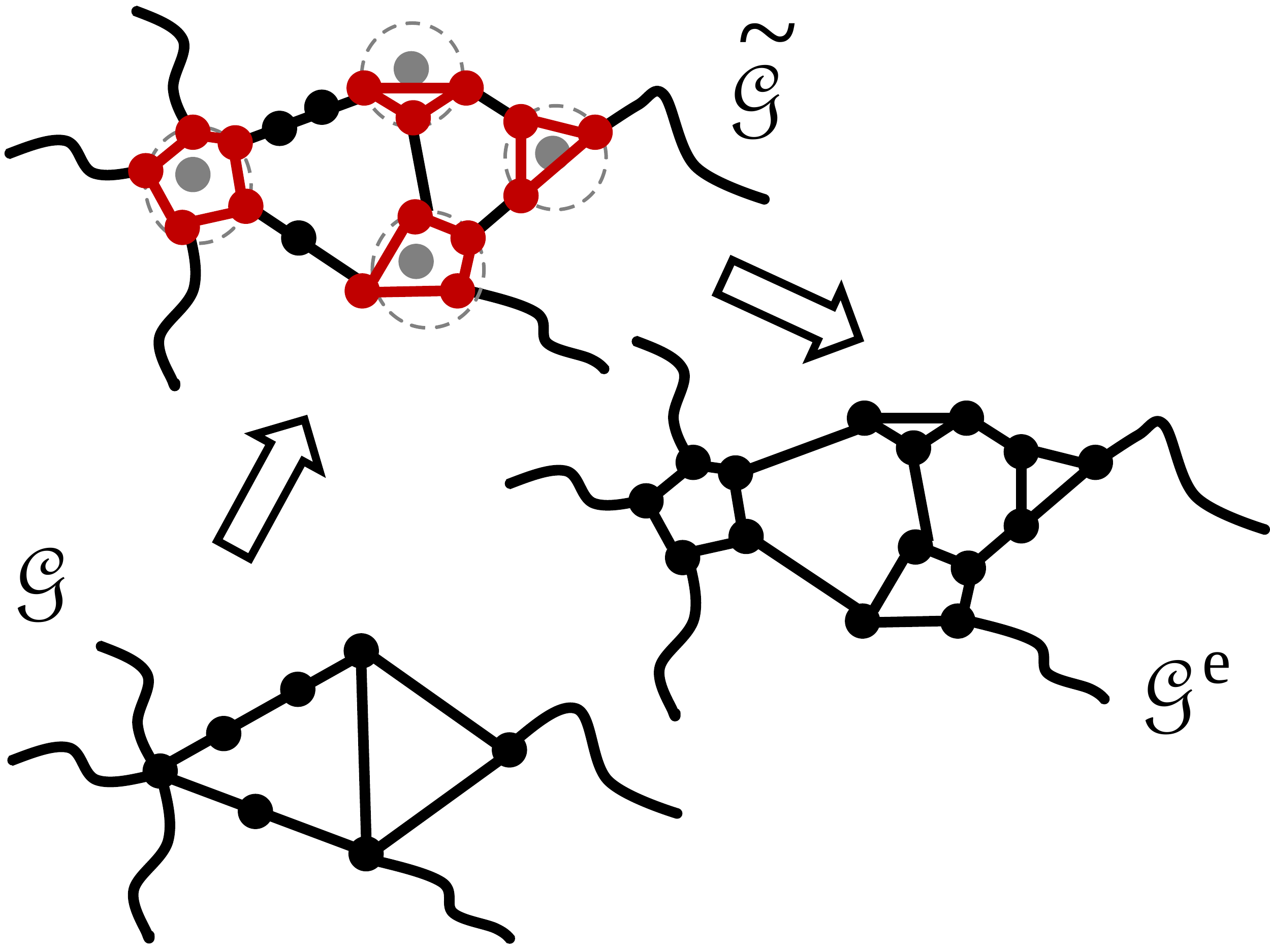}
\caption{ Construction of the extended graph ${\cal G}^e$ associated with ${\cal G}$. A degree three graph $\tilde{\cal G}$ is constructed on an intermediate step. $\tilde{\cal G}$ is obtained from ${\cal G}$ by extending vertices. ${\cal G}^e$ is obtained from ${\cal G}$ by replacing linear segments with elementary edges. \label{fig:New1}}
\end{figure}

Construction of the extended graph, ${\cal G}^e$, illustrated in Fig.~(\ref{fig:New1}), is done in two steps. We start from an original irreducible graph ${\cal G}$ of degree three or higher, which, in particular does not have vertices of degree one, and extend its vertices of degree three or higher to  circles (or polygons) so that the adjacent edges of the original graph are connected to the polygon vertices. The procedure results in an irreducible graph $\tilde{{\cal G}}$ of degree three,  which is however not necessarily homogeneous degree three, as it may have linear segments consisting of vertices of degree two. We further replace the linear segments of $\tilde{{\cal G}}$ with elementary edges to obtain an irreducible homogeneous graph ${\cal G}^e$ of degree three.

\begin{definition}[Extended Graph. Extending Polygons.]
\label{def:extended} The constructed above (in the preceding paragraph) irreducible homogeneous graph ${\cal G}^e$ of degree three
is called the extended graph associated with an irreducible graph ${\cal G}$. The polygons that replace the vertices of degree three and higher are called the extending polygons.
\end{definition}

We continue working the irreducible degree three homogeneous degree three graph  ${\cal G}^e$ and observe that by construction it has an even number of vertices. Thus, following Kasteleyn \cite{63Kas,67Kas} (see also \cite{86LP} for a comprehensive review) we construct the so-called Kasteleyn edge orientation on the graph. (See also Fig.~\ref{fig:edge-triplet}a for an illustration.)
\begin{definition}[Kasteleyn edge orientation.]
\label{def:Kasteleyn-edge}
An edge orientation of ${\cal G}^e$ (which is a planar graph with even number of vertices) is called Kasteleyn edge orientation of the graph if for any (even or odd) face of the graph (possibly under exception of the outer face) the number of clockwise-oriented (negative) edges is odd.
This is formalized in terms of the vector, ${\bm \sigma}^e=(\sigma_{ab}^e=\pm 1|\{a,b\}\in{\cal G}^e)$, where $\sigma_{ab}^e=+1$ if the arrow/orientation assigned to the undirected edge $\{a,b\}$ goes from $a$ to $b$,  and $\sigma_{ab}^e=-1$ otherwise.
\end{definition}

\begin{figure}
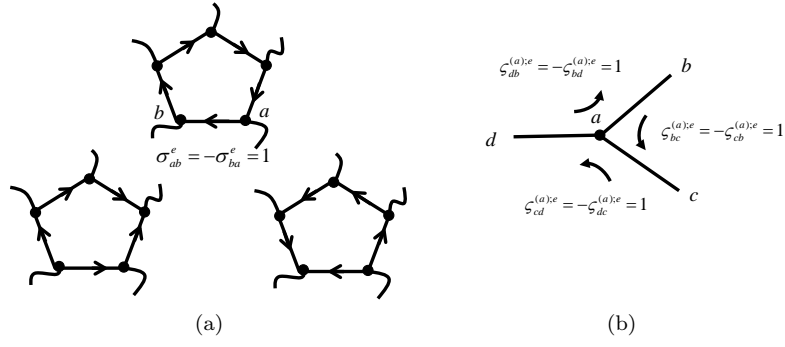
\centering
\vspace{-0.5cm}\subfigure[]{\includegraphics[width=0.3\textwidth,page=2]{NewPlanarFigs.pdf}}
\subfigure[]{\includegraphics[width=0.3\textwidth,page=3]{NewPlanarFigs.pdf}}
\caption{Fig.~(a) illustrates construction of the Kasteleyn orientation of edges - the three examples (with one, three and five clockwise edges per
face respectively)  are all legitimate Kasteleyn orientations of the selected internal face. Fig.~(b) illustrates construction of a triplet orientation of a vertex of degree three (i.e. a vertex of ${\cal G}^e$).\label{fig:edge-triplet}}
\end{figure}

Furthermore, ${\cal G}^e$ is also equipped with a natural triplet orientation, built using the following rule: we set $\varsigma_{bc}^{(a);e}=1$, if going from $b$ to $c$ represents a left most turn at $a$, and $\varsigma_{bc}^{(a);e}=-1$ otherwise. (See Fig.~\ref{fig:edge-triplet}b for an illustration.) ${\bm \varsigma}$ is a vector constructed from all the triplet orientations of ${\cal G}^e$.
The apparently sloppy construction is actually rigorous, since a planar graph, in particular ${\cal G}^e$, is supplied with a cyclic (e.g., counterclockwise) ordering of edges, adjacent to any node.

\begin{definition}[Kasteleyn (Complete) Orientation.]
\label{def:extended} A constructed above triplet orientation ${\bm\varsigma}^{e}$ is called the left triplet orientation of ${\cal G}^e$. A (complete) orientation $({\bm\sigma}^{e},{\bm\varsigma}^{e})$ of an irreducible homogeneous graph ${\cal G}^e$ of degree three, represented by a Kasteleyn edge orientation ${\bm\sigma}^{e}$ and the left triplet orientation ${\bm\varsigma}$ is called a Kasteleyn (complete) orientation of ${\cal G}^e$.
\end{definition}

\begin{figure}
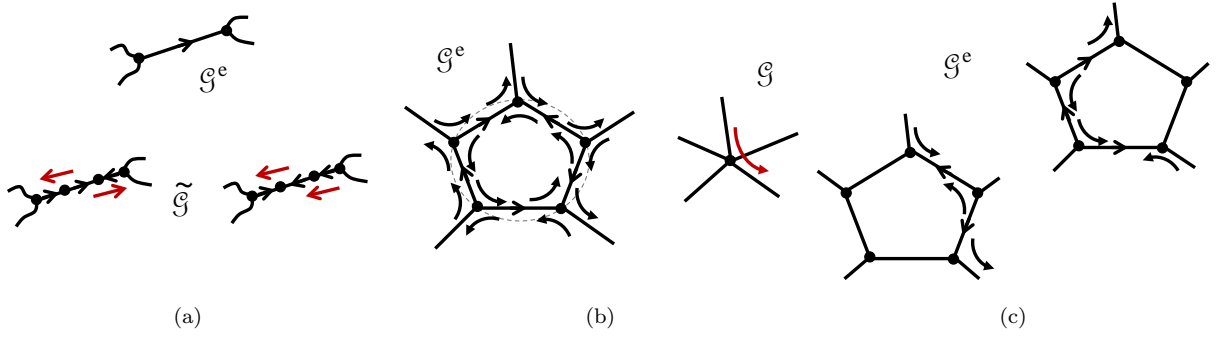
\centering
\subfigure[]{\includegraphics[width=0.3\textwidth,page=4]{NewPlanarFigs.pdf}}
\subfigure[]{\includegraphics[width=0.3\textwidth,page=5]{NewPlanarFigs.pdf}}
\subfigure[]{\includegraphics[width=0.3\textwidth,page=6]{NewPlanarFigs.pdf}}
\caption{Fig. (a) shows two examples of possible choice of edge and triplet orientations within a linear segment of $\tilde{\cal G}$.
The two examples are compatible, i.e. they both results according to Eq.~(\ref{linear-rule}) in the same orientation of the respective single edge in ${\cal G}^e$ replacing the linear segment in $\tilde{\cal G}$. The number
of arrows,  marking both edge and triplet orientations within the linear segment of $\tilde{\cal G}$, directed against the resulting arrow/orientation of the respective edge of ${\cal G}^e$ is even.
Fig. (b) shows fragment of complete Kasteleyn orientation on ${\cal G}^e$ and the related induced orientation of the respective triplet on ${\cal G}$. Fig. (c) illustrates how two equivalent ways of counting orientations on ${\cal G}^e$ results in the same triplet orientation
on ${\cal G}$,  correspondent to the triplet from Fig. (b): the total number of the upper/left orientation within either of the paths (shown on left and right figure) which are against the resulting (red) arrow on ${\cal G}$ is even.
 \label{fig:e-to-tilde}}
\end{figure}

{\bf Inducing orientations.} Our next task becomes to generate from $({\bm\sigma}^{e},{\bm\varsigma}^{e})$, defined on ${\cal G}^e$, a set of useful orientations $({\bm\sigma},{\bm\varsigma})$ on the original graph ${\cal G}$. This task is solved in a number of steps. We first build a set of orientations $(\tilde{{\bm\sigma}},\tilde{{\bm\varsigma}})$ on $\tilde{{\cal G}}$ by staying with the left triplet orientation for vertices of degree three and choosing the edge and triplet orientations on the linear segments of $\tilde{{\cal G}}$ (end vertices excluded) in an arbitrary way with the only constraint per linear segment that the product,
\begin{equation}
\tilde{\sigma}_{a_{0}a_{1}}\ldots \tilde{\sigma}_{a_{l-1}a_{l}}\tilde{\varsigma}_{a_{0}a_{2}}^{(a_{1})}\ldots \tilde{\varsigma}_{a_{l-2}a_{l}}^{(a_{l-1})}=\sigma_{a_{0}a_{l}}^{e},
\label{linear-rule}
\end{equation}
of all the edge an triplet sign factors reproduce the sign factor for the corresponding edge of ${\cal G}^e$. See Fig.~\ref{fig:e-to-tilde}a for illustration of the construction. Then,  we turn to constructing orientation on ${\cal G}$, first building respective components of $({\bm\sigma},{\bm\varsigma})$ for degree two vertices to be identical to the respective (and just constructed) orientations of the degree two vertices on $\tilde{\cal G}$. The components of ${\bm\varsigma}$ related to degree three and higher vertices of ${\cal G}$ are defined as the products of the edge and inner triplet orientation factors of ${\bm\sigma}^{e}$ ${\bm\varsigma}^{e}$, respectively, over the paths that connect the corresponding edges going over the extending polygons in either (counterclockwise or clockwise) direction, as shown in Fig.~\ref{fig:e-to-tilde}b,c. The Kasteleyn nature of the complete orientation $({\bm\sigma}^{e},{\bm\varsigma}^{e})$ ensures the independence of ${\bm\varsigma}$ on the choice of the directions of the connecting paths.
\begin{definition}[Induced Orientation.]
\label{def:induced-orientation} A (complete) orientation $({\bm\sigma},{\bm\varsigma})$ of an irreducible graph, constructed above (in the preceding paragraph) is called an induced (complete) orientation, or more specifically an orientation induced from a Kasteleyn edge orientation ${\bm\sigma}^e$ of ${\cal G}^e$.
\end{definition}

\begin{figure}\centering
\includegraphics[width=0.3\textwidth,page=7]{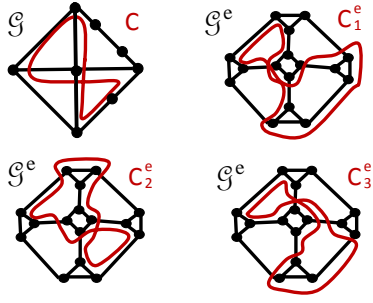}
\caption{Illustration of the lifting procedure.  Top left: an example of ${\cal G}$ (black) and an immersed orbit $C$ (red). Three other sub-figures
 show three different (but equivalent in the sense of their respective $\varepsilon_{{\bm\sigma}^e,{\bm\varsigma}^e}(C^e)$ contributions, see
 Proposition \ref{prop-lift-epsilon})
lifted version of $C$ on the respective extended graph, ${\cal G}^e$.
\label{fig:lift}}
\end{figure}

{\bf Lifting immersed orbits to the extended graph.} Induced orientations are constructed in a way that the $\varepsilon(C)$-factors for the immersed orbits in ${\cal G}$, defined by Eq.~(\ref{define-epsilon}), can be studied by {\it lifting} them to the extended graph.  Given an immersed orbit in ${\cal G}$ we first lift it to $\tilde{{\cal G}}$ by connecting the ends of the linear segments of $C$ with paths going over the extending polygons. By further replacing the linear segments with elementary edges we obtain a lifted orbit in ${\cal G}^e$. Obviously, by construction a lifted orbit, $C^e$, is not unique and is specified by the choice of the directions (counterclockwise/clockwise) of the connecting paths. Lifting procedure is illustrated in Fig.~\ref{fig:lift}.

\begin{proposition}
\label{prop-lift-epsilon}  Let $({\bm\sigma},{\bm\varsigma})$ on ${\cal G}$ be induced from a Kasteleyn orientation $({\bm\sigma}^e,{\bm\varsigma}^e)$ on ${\cal G}^e$, and $C^e$ be any lift of an immersed orbit $C$ of ${\cal G}$ to the extended graph ${\cal G}^e$. Then $\varepsilon_{{\bm\sigma}^{e},{\bm\varsigma}^{e}}(C^e)
=\varepsilon_{{\bm\sigma},{\bm\varsigma}}(C)$.
\end{proposition}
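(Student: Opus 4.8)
The plan is to compare the defining product $\varepsilon_{{\bm\sigma},{\bm\varsigma}}(C)$, over the edges and triplets of $C$ in ${\cal G}$, with the corresponding product $\varepsilon_{{\bm\sigma}^{e},{\bm\varsigma}^{e}}(C^e)$ over the edges and triplets of the lift $C^e$ in ${\cal G}^e$, and to show that the passage from $C$ to $C^e$ introduces, edge-by-edge and vertex-by-vertex, only sign factors that cancel in pairs. The orbit $C$ decomposes into three kinds of local pieces as we traverse it: (i) passage through a degree-two vertex of ${\cal G}$, (ii) traversal of an elementary edge of ${\cal G}$ that is \emph{not} part of a linear segment, i.e. one that survives unchanged into $\tilde{\cal G}$, and (iii) passage through a degree-three-or-higher vertex of ${\cal G}$, where $C$ enters along one adjacent edge and leaves along another. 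The first two kinds contribute identically to both $\varepsilon$'s by construction, since the induced orientation at degree-two vertices and at such edges is literally copied from $\tilde{\cal G}$ (hence, via the linear-segment rule, ultimately from ${\cal G}^e$); only piece (iii) needs real work.

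First I would set up the bookkeeping for piece (iii). When $C$ passes through a high-degree vertex $a$ of ${\cal G}$ using the ordered pair of neighbors $(b,c)$, the contribution to $\varepsilon_{{\bm\sigma},{\bm\varsigma}}(C)$ from this visit is exactly the single factor $\varsigma^{(a)}_{bc}$. In the lift, $a$ has been blown up into an extending polygon, and $C^e$ traverses a path $\pi$ along that polygon from the polygon-vertex attached to edge $\{a,b\}$ to the one attached to edge $\{a,c\}$; along $\pi$ the lift picks up the product of the $\sigma^{e}$-factors of the polygon edges used and the $\varsigma^{e}$-factors of the inner triplets traversed. But this is precisely the product that \emph{defines} $\varsigma^{(a)}_{bc}$ in the inducing construction (the paragraph preceding Definition~\ref{def:induced-orientation}), and the Kasteleyn property of $({\bm\sigma}^{e},{\bm\varsigma}^{e})$ guarantees this product is independent of whether $\pi$ goes clockwise or counterclockwise around the polygon — which is also exactly what makes $\varepsilon_{{\bm\sigma}^{e},{\bm\varsigma}^{e}}(C^e)$ independent of the chosen lift, as asserted in the statement. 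So the visit-(iii) contributions match factor for factor.

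Next I would handle the linear-segment issue raised by piece-(ii)-versus-reality: an ``elementary edge of ${\cal G}$'' may in fact be a linear segment $a_0,a_1,\dots,a_l$ that gets collapsed to a single edge of ${\cal G}^e$. When $C$ runs through such a segment it contributes $\prod_j \sigma_{a_{j-1}a_j}\cdot\prod_j\varsigma^{(a_j)}_{a_{j-1}a_{j+1}}$ to $\varepsilon_{{\bm\sigma},{\bm\varsigma}}(C)$; by Eq.~(\ref{linear-rule}) this equals $\sigma^{e}_{a_0a_l}$, which is exactly what the single collapsed edge contributes to $\varepsilon_{{\bm\sigma}^{e},{\bm\varsigma}^{e}}(C^e)$. (Here one uses that a linear segment has no self-intersections, so $C$ traverses it at most once, and the orientations inside it were chosen once and for all, not per-orbit.) Finally, I would assemble: writing $\varepsilon_{{\bm\sigma},{\bm\varsigma}}(C)$ as the product over its local pieces, each piece equals the corresponding local piece of $\varepsilon_{{\bm\sigma}^{e},{\bm\varsigma}^{e}}(C^e)$, and multiplying through gives the claimed equality. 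The main obstacle is the combinatorial care in Step~(iii): one must verify that the ``enter along $\{a,b\}$, leave along $\{a,c\}$'' data in $C$ is faithfully recorded by the polygon-path in $C^e$ \emph{including} the fate of the left-turn triplet factors $\varsigma^{e}$ at the polygon's own degree-three vertices, and that the ambiguity in the lift (direction of each polygon traversal, plus direction of each linear-segment reinsertion) never changes the total — both of which rest squarely on the Kasteleyn condition and Eq.~(\ref{linear-rule}), so the argument is really an exercise in matching the definitions rather than a new idea.
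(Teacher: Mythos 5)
Your route is essentially the paper's: split the product defining $\varepsilon_{{\bm\sigma},{\bm\varsigma}}(C)$ into linear-segment pieces, matched to single edges of ${\cal G}^e$ via Eq.~(\ref{linear-rule}), and high-degree-vertex pieces, matched to the connecting paths over the extending polygons via the defining product of the induced triplet orientation, with the Kasteleyn property giving independence of the choice of lift. However, there is a genuine gap in your step (iii). By Eq.~(\ref{define-epsilon}), $\varepsilon_{{\bm\sigma}^{e},{\bm\varsigma}^{e}}(C^e)$ contains a $\varsigma^{e}$ factor at \emph{every} vertex visited by $C^e$ --- in particular at the two polygon corners where $C^e$ turns off the external edge onto the extending polygon and then back off it. These two ``junction'' triplet factors are picked up by the lift, but they are \emph{not} part of the product of edge and inner-triplet factors that defines the induced $\varsigma^{(a)}_{bc}$, nor are they covered by Eq.~(\ref{linear-rule}). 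So your claim that the polygon-path contribution ``is precisely the product that defines $\varsigma^{(a)}_{bc}$'' is not a factor-for-factor match as stated: the lift carries two extra signs per visit to a high-degree vertex that have no counterpart on the ${\cal G}$ side.

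You do flag this at the end (``the fate of the left-turn triplet factors at the polygon's own degree-three vertices''), but you attribute its resolution to the Kasteleyn condition and Eq.~(\ref{linear-rule}), and that is not where it comes from. The missing observation --- the one substantive point in the paper's own proof --- is geometric: for a counterclockwise traversal of the connecting path the entry and exit turns at the polygon corners are both right turns, and for a clockwise traversal both left turns; hence the two junction factors are equal and their product is always $+1$, so they drop out of $\varepsilon_{{\bm\sigma}^{e},{\bm\varsigma}^{e}}(C^e)$ regardless of the lift. With that one observation added your argument closes; without it the asserted equality of the local contributions at vertices of degree three or higher does not follow from matching the definitions alone.
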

\begin{proof} The statement follows directly from the construction of the induced orientation. For $\varepsilon_{{\bm\sigma},{\bm\varsigma}}(C)$ the products of edge an inner triplet factors over the linear segments reproduce the edge factors ${\bm\sigma}^{e}$ of the corresponding edges of ${\cal G}^e$. The triplet factors of ${\bm\varsigma}$ for vertices of degree three or higher reproduce the products of the edge and inner triplet factors of $({\bm\sigma}^{e},{\bm\varsigma}^{e})$ over the connecting paths. The triplets that connect the linear segments to the connecting paths always come in pairs with two right/left turns for the counterclockwise/clockwise choice of the connecting path. \qed
\end{proof}

The important role of the Kasteleyn complete orientations and the corresponding induced complete orientations is described by the following statements.
\begin{proposition}
\label{P4} A Kasteleyn (complete) orientation of an irreducible homogeneous planar graph of degree three, ${\cal G}^e$
is suitable. According to Definition \ref{def:suitable}, this is obviously equivalent to the statement that $\varepsilon(C^e)=-1$ for any embedded curve (simple cycle) $C^e$ \footnote{We may consider any irreducible homogeneous graph of degree three here, and not necessarily the graph ${\cal G}^e$ extended from an arbitrary irreducible planar graph.  However,  since the proposition will be applied in the following solely to the extended graph, ${\cal G}^e$, we choose not to introduce new notations for an arbitrary  irreducible homogeneous graph of degree three, using ${\cal G}^e$ for this purposes here.}.
\end{proposition}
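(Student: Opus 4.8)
The plan is to first invoke the reduction already noted in the statement: on a homogeneous degree‑three graph every vertex of a $\mathbb{Z}_{2}$‑cycle $\gamma$ has degree two in $\gamma$, so $\gamma$ splits uniquely into vertex‑disjoint simple cycles and carries a unique partition datum ${\bm\xi}$ with $N(\gamma,{\bm\xi})=0$; hence Definition~\ref{def:suitable} is equivalent to the single assertion $\varepsilon(C)=-1$ for every simple cycle $C$ of ${\cal G}^{e}$. Fixing an (arbitrary, by Remark~\ref{Rem-1}) traversal of $C$, I would then write $\varepsilon(C)=(-1)^{R(C)+n_{-}(C)}$, where $R(C)$ counts the triplets of $C$ that are \emph{not} left‑most turns (those carrying $\varsigma^{e}=-1$) and $n_{-}(C)$ counts the edges of $C$ whose Kasteleyn arrow opposes the traversal, so that the goal becomes the parity identity $R(C)+n_{-}(C)\equiv 1\pmod 2$.

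\noindent Next I would prove this parity identity by induction on the number $m$ of faces of ${\cal G}^{e}$ contained in the closed disk $D$ bounded by $C$ (only bounded faces occur in $D$; since an irreducible simple graph of degree three is bridgeless and has no cut vertex, ${\cal G}^{e}$ is $2$‑connected and all its faces are simple cycles). For the base case $m=1$, $C=\partial F$ is a face boundary; traversing $C$ counter‑clockwise around $F$ the walk takes at every vertex the left‑most turn — this is exactly the face‑tracing rule of the rotation system — so $R(C)=0$, while $n_{-}(C)$ is the number of clockwise edges of $F$ and is odd by the defining property of a Kasteleyn edge orientation; thus $R(C)+n_{-}(C)\equiv 1$.

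\noindent For the inductive step $m\ge 2$ I would use $2$‑connectivity (Menger) to extract a path $\pi\subset D$ that is internally disjoint from $C$, joins two distinct vertices $u,w$ of $C$, and divides $D$ into two disks $D_{1},D_{2}$, each containing at least one — hence fewer than $m$ — faces. Writing $C=A_{1}\cup A_{2}$ for the two arcs of $C$ cut off by $u,w$, and $C_{1}=A_{1}\cup\pi$, $C_{2}=A_{2}\cup\pi$, the induction hypothesis gives $\varepsilon(C_{1})=\varepsilon(C_{2})=-1$, so it remains to show $\varepsilon(C)\,\varepsilon(C_{1})\,\varepsilon(C_{2})=-1$. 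Expanding each $\varepsilon$ as a product of its $\sigma^{e}$ (edge) and $\varsigma^{e}$ (triplet) factors: the contributions of $A_{1}$ (traversed in the same sense in $C$ and $C_{1}$) and of $A_{2}$ (traversed in the same sense in $C$ and $C_{2}$) cancel in pairs; the $|\pi|$ edges of $\pi$, traversed oppositely in $C_{1}$ and $C_{2}$, give $(-1)^{|\pi|}$ and its $|\pi|-1$ interior triplets give $(-1)^{|\pi|-1}$; and at each of $u$ and $w$ the three cycles realise the three distinct pairs among the three edges of that degree‑three vertex, whence the product of the three relevant $\varsigma^{e}$‑factors there is determined by the cyclic arrangement of the three incident regions (the exterior of $D$, $D_{1}$, $D_{2}$) and equals $-1$ at each of $u,w$. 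Multiplying, $\varepsilon(C)\varepsilon(C_{1})\varepsilon(C_{2})=(-1)^{|\pi|}(-1)^{|\pi|-1}(-1)(-1)=-1$, and therefore $\varepsilon(C)=-1$, closing the induction.

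\noindent The hard part will be the corner bookkeeping at $u$ and $w$. It rests on the purely combinatorial description of a left‑most turn at a degree‑three vertex in terms of the rotation (an incoming edge is continued by its clockwise neighbour in the cyclic order at the vertex), together with the fact that planarity pins down — up to a single reflection that acts simultaneously at $u$ and at $w$ — which of the three sectors around the vertex holds $D_{1}$, which holds $D_{2}$, and which holds the exterior; one must check that both corner contributions come out to $-1$ (equivalently, that their product is $+1$) independently of that reflection. A smaller technical point is the existence of a splitting path $\pi$ with both sides non‑degenerate, where $2$‑edge‑connectedness (irreducibility) of ${\cal G}^{e}$ and the absence of multiple edges are used.
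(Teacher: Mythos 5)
Your proposal is correct, but it proves the key identity by a genuinely different route than the paper. The paper's proof is a single global computation over the region ${\cal M}$ enclosed by $C^e$: it multiplies the Kasteleyn face condition $\varepsilon_{{\bm\sigma}^e}(\partial M)=-1$ over all $n_2$ enclosed faces to get $\varepsilon_{{\bm\sigma}^e}(C^e)=(-1)^{n_1-n_2-l}$, counts right turns through the degree identity $2n_1=3(n_0-l)+3n_R+2(l-n_R)$ to get $\varepsilon_{{\bm\varsigma}^e}(C^e)=(-1)^{n_0-l}$, and closes with the Euler relation $n_0-n_1+n_2=1$. You instead induct on the number of enclosed faces: the base case is exactly the Kasteleyn parity combined with the left-turn (face-tracing) rule, and the inductive step is the splitting-path identity $\varepsilon(C)\,\varepsilon(C_1)\,\varepsilon(C_2)=-1$, which is in essence a local, step-by-step repackaging of the paper's Euler-characteristic bookkeeping. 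Your argument is complete in outline and the step you flag as delicate does work: with all three cycles traversed counterclockwise, the interior-of-$A$ and interior-of-$\pi$ cancellations are as you state, and the product of the three $\varsigma^e$-factors at $u$ (and likewise at $w$) equals $-1$ for \emph{each} of the two possible cyclic arrangements of the sectors $D_1$, $D_2$, exterior around the degree-three vertex — so the two corners need not even be correlated through a common reflection, and only two local cases per corner must be checked. What each approach buys: the paper's computation is shorter and avoids any existence argument, while yours is purely local (no Euler formula, no global degree count), at the price of the corner case analysis and of establishing, via $2$-connectivity of the cubic irreducible graph, a splitting path with at least one face on each side.
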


\begin{figure} \includegraphics[width=0.4\textwidth,page=15]{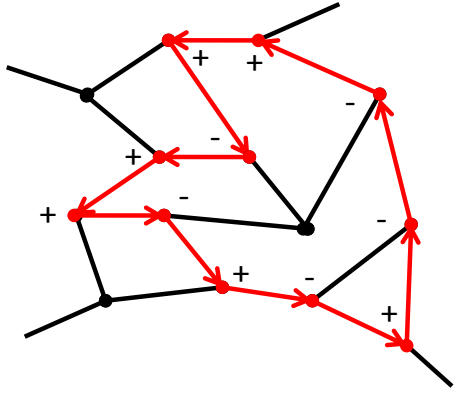} \caption{Illustration of the $\varepsilon_{{\bm\varsigma}^e}$ computation for an imbedded orbit of a homogeneous degree three graph, used in the proof of the Proposition \ref{P4}. Arrow along the marked (red) embedded orbit indicates anti-clockwise (positive) direction of the walk. $\pm $ , standing next to vertices along the embedded orbit (simple cycle), correspond to the values of respective $\varsigma$ contributions. The number of edges adjusted to the $\pm$ vertex and belonging to the interior of the cycle is two/three respectively. \label{fig:valence}} \end{figure}

\begin{proof}
The simple cycle $C^e$ partitions the plane into two connected regions, interior and exterior respectively. Therefore,  $C^e=\partial{\cal M}$ is the boundary of the ``inside''
connected component (the one whose closure is compact), represented by a planar graph ${\cal
M}\subset{\cal G}^e$ (the boundary is included). It is enough to prove the statement for the counterclockwise
orientation of $C^e$, throughout this proof all involved orientations are counterclockwise. We further represent $\varepsilon(C^e)=\varepsilon_{{\bm\sigma}^e}\varepsilon_{{\bm\varsigma}^e}$ as a product of edge and triplet contributions along the embedded curve $C^e$, and denote by $n_{0}$, $n_{1}$, $n_{2}$ and $l$ the number of vertices, edges, and faces of ${\cal M}$ and the length (number of edges) of $C^e$, respectively. Considering the product of $\varepsilon_{{\bm\sigma}^e}(\partial M)=-1$ (the equality is due to the Kasteleyn nature of ${\bm\sigma}^e$) over all faces of
${\cal M}$  we arrive at $(-1)^{n_{2}}=(-1)^{n_{1}-l}\varepsilon_{{\bm\sigma}^e}(C^e)$, since each internal edge participates in the product exactly twice and with opposite orientations. Therefore, $\varepsilon_{{\bm\sigma}^e}(C^e)=(-1)^{n_{1}-n_{2}-l}$.
The triplet contribution is estimated $\varepsilon_{{\bm\varsigma}^e}(C^e)=(-1)^{n_{R}}$ with $n_{R}$ being the number of right turns of $C^e$. Since each internal node of $\cal M$ is of degree three, whereas a boundary node $a\in C_{0}$ has degree two/three (in ${\cal M}$) when
$C$ turns left/right at $a$, and $\varsigma^{(a);e}=\pm 1$, respectively (see Fig.~\ref{fig:valence} for an illustration). In view of the above argument we have for the number of edges $2n_{1}=3(n_{0}-l)+3n_{R}+2(l-n_{R})$. This results in $\varepsilon_{{\bm\varsigma}^e}(C^e)=(-1)^{n_{0}-l}$.
Combining the obtained results for the edge and triplet contribution we arrive at $\varepsilon(C^e)=(-1)^{n_{0}-n_{1}+n_{2}}=-1$, with the second equality due to the Euler characteristic relation, $n_{0}-n_{1}+n_{2}=1$. \qed
\end{proof}
\begin{corollary}
\label{cor:almost-embedded} If an immersed orbit $C$ in ${\cal G}$ is lifted to an embedded orbit in ${\cal G}^e$, then $\varepsilon_{{\bm\sigma},{\bm\varsigma}}(C)=-1$, provided $({\bm\sigma},{\bm\varsigma})$ is induced.
\end{corollary}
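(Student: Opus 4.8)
The plan is simply to chain together the two results established immediately above. First I would invoke Proposition~\ref{prop-lift-epsilon}: by hypothesis the immersed orbit $C$ of ${\cal G}$ admits a lift $C^e$ to the extended graph ${\cal G}^e$, and since $({\bm\sigma},{\bm\varsigma})$ is induced from a Kasteleyn orientation $({\bm\sigma}^e,{\bm\varsigma}^e)$ on ${\cal G}^e$, that proposition gives $\varepsilon_{{\bm\sigma},{\bm\varsigma}}(C)=\varepsilon_{{\bm\sigma}^e,{\bm\varsigma}^e}(C^e)$. This step reduces the claim to a statement purely about the extended graph, and it is legitimate because the lifting construction was engineered so that the edge/triplet factors over the linear segments and over the extending polygons exactly reproduce the corresponding factors of $({\bm\sigma}^e,{\bm\varsigma}^e)$.

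Next I would use the extra hypothesis of the corollary, namely that the chosen lift $C^e$ is an \emph{embedded} orbit of ${\cal G}^e$, i.e. a simple cycle. Proposition~\ref{P4} (in the equivalent form stated there) asserts precisely that a Kasteleyn complete orientation of the irreducible homogeneous degree-three graph ${\cal G}^e$ satisfies $\varepsilon(C^e)=-1$ for every simple cycle $C^e$. Applying this to our $C^e$ gives $\varepsilon_{{\bm\sigma}^e,{\bm\varsigma}^e}(C^e)=-1$.

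Combining the two equalities yields $\varepsilon_{{\bm\sigma},{\bm\varsigma}}(C)=-1$, which is exactly the assertion. I do not expect any genuine obstacle: the corollary is a direct composition of Propositions~\ref{prop-lift-epsilon} and~\ref{P4}, and the only thing one needs to verify — that the object produced by the lifting construction is a bona fide simple cycle in ${\cal G}^e$ to which Proposition~\ref{P4} applies — is immediate from the definitions of the lift and of an embedded orbit. All the substantive work (in particular the Euler-characteristic bookkeeping $n_0-n_1+n_2=1$ and the turn-counting identity $2n_1=3(n_0-l)+3n_R+2(l-n_R)$ in the proof of Proposition~\ref{P4}) has already been carried out and need not be revisited here.
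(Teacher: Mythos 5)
Your argument is exactly the paper's own proof: apply Proposition~\ref{prop-lift-epsilon} to identify $\varepsilon_{{\bm\sigma},{\bm\varsigma}}(C)$ with $\varepsilon_{{\bm\sigma}^{e},{\bm\varsigma}^{e}}(C^e)$ for the given embedded lift, then invoke Proposition~\ref{P4} to conclude the value is $-1$. Correct, and no meaningful difference from the paper's reasoning.
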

\begin{proof} Let $C^e$ be any lift that satisfies the condition of the corollary. Then $\varepsilon_{{\bm\sigma},{\bm\varsigma}}(C)=\varepsilon_{{\bm\sigma}^{e},{\bm\varsigma}^{e}}(C^e)=-1$ by Propositions~\ref{prop-lift-epsilon} and \ref{P4}.
\end{proof}

\subsection{Induced Orientations Are Suitable}
\label{subsec:induced-suitable}

The following statement explicitly provides a broad class of suitable orientations on an irreducible graph.
\begin{lemma}
\label{Lemma} Any induced (complete) orientation on an irreducible graph is suitable.
\end{lemma}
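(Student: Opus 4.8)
The plan is to prove Lemma~\ref{Lemma} by reducing the suitability condition (\ref{basic-property}) on an arbitrary decomposed cycle $(\gamma,{\bm\xi})$ of the irreducible graph ${\cal G}$ to the already-established fact (Proposition~\ref{P4}, via Corollary~\ref{cor:almost-embedded}) that $\varepsilon(C^e)=-1$ for every embedded orbit on the extended graph ${\cal G}^e$. Recall that by (\ref{define-q}) we must show $(-1)^{N(\gamma,{\bm\xi})}\prod_{j=1}^{n(\gamma,{\bm\xi})}(-\varepsilon(C_{j}(\gamma,{\bm\xi})))=1$. The decomposition components $C_j$ are immersed (not necessarily embedded) orbits in ${\cal G}$ whose self- and mutual intersections occur only at vertices, and $N(\gamma,{\bm\xi})$ counts precisely those vertex crossings via the planar chord-intersection index of Subsection~\ref{subsec:WBG_def}. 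The key idea is that the extension procedure of Section~\ref{subsec:extended-induced} resolves every such vertex crossing: when we lift $(\gamma,{\bm\xi})$ to ${\cal G}^e$, the partition data ${\bm\xi}$ dictates which pairs of edges get routed together through each extending polygon, and a crossing of two chords $I_\alpha\cdot I_{\alpha'}=1$ at vertex $a$ becomes an honest transversal intersection of two arcs inside the polygon replacing $a$.

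First I would set up the lift carefully. Given $(\gamma,{\bm\xi})$, each component $C_j$ is lifted — using Proposition~\ref{prop-lift-epsilon} — to an orbit $C_j^e$ on ${\cal G}^e$, where at each vertex $a$ the local partition $\xi_a$ tells us how to pair up the incident edges and route each pair through the extending polygon. With the partition data respected, the union $\bigcup_j C_j^e$ becomes a disjoint collection of \emph{embedded} orbits on ${\cal G}^e$ (since all the original vertex crossings have now been pushed off the graph into the interiors of polygons, and ${\cal G}^e$ itself is simple of degree three). By Corollary~\ref{cor:almost-embedded} (equivalently Proposition~\ref{P4} applied to each component directly), $\varepsilon_{{\bm\sigma},{\bm\varsigma}}(C_j)=\varepsilon_{{\bm\sigma}^e,{\bm\varsigma}^e}(C_j^e)=-1$ for every $j$, so $\prod_{j}(-\varepsilon(C_j))=\prod_j(-(-1))=1^{n(\gamma,{\bm\xi})}=1$. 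Hence the product of the $(-\varepsilon)$ factors contributes trivially, and the whole burden of suitability collapses to showing $(-1)^{N(\gamma,{\bm\xi})}=1$.

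So the heart of the proof — and the step I expect to be the main obstacle — is establishing that $N(\gamma,{\bm\xi})$ is always even. This does not follow from a purely local count at a single vertex (a single vertex can easily contribute an odd $N_a$). Instead it is a global parity statement: $N(\gamma,{\bm\xi})=\sum_{a\in\gamma} N_a(\gamma,{\bm\xi})$ counts the total number of transversal crossing points among the curves $C_j^e$ after the lift, and a finite family of closed curves in the plane in general position has an even total number of pairwise intersection points. The cleanest route is to observe that each $C_j$, viewed as a closed loop, and each pair $C_j,C_{j'}$ (including $j=j'$ for self-intersections, handled by a small perturbation so the curve is in general position) must cross an even number of times: a closed curve crosses another closed curve an even number of times because each crossing changes the ``inside/outside'' parity and one returns to the start. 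More precisely, for $j\neq j'$ the mod-$2$ intersection number of two $1$-cycles in the plane (or $S^2$) vanishes since $H_1(S^2;\mathbb{Z}_2)=0$; for $j=j'$ a generic immersed closed curve in the plane has an even number of self-intersections (its ``Whitney'' count, again because the curve bounds). Summing these even contributions over all (unordered) pairs $\{j,j'\}$ and all self-terms gives $N(\gamma,{\bm\xi})\equiv 0 \pmod 2$.

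The technical care needed is in matching the combinatorial definition of $N_a$ (chord crossings on the little circle $S^1$ around $a$) with the geometric crossings of the lifted arcs inside the extending polygon at $a$: one must check that routing the paired edges as non-crossing-where-possible arcs through the polygon realizes exactly $I_\alpha\cdot I_{\alpha'}$ transversal crossings for the pair $\alpha,\alpha'\in\xi_a$, so that $\sum_a N_a$ genuinely equals the total geometric crossing count of the family $\{C_j^e\}$. Once that identification is in hand, the global evenness argument above closes the proof: $(-1)^{N(\gamma,{\bm\xi})}=1$ and $\prod_j(-\varepsilon(C_j))=1$, so $(-1)^{q(\gamma,{\bm\xi})}=1$, which is exactly Definition~\ref{def:suitable}. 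I would also remark that the independence of the routing choice and of the lift direction is already guaranteed by the Kasteleyn property (as noted after Definition~\ref{def:induced-orientation} and in Proposition~\ref{prop-lift-epsilon}), so no additional checking is required there.
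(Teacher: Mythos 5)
Your proposal contains a genuine gap, and it sits exactly where you predicted the ``main obstacle'' would be --- but the obstacle is fatal to your route rather than a technicality. First, the claim that every component $C_{j}(\gamma,{\bm\xi})$ lifts to an \emph{embedded} orbit on ${\cal G}^e$ is false for a general partition ${\bm\xi}$: if the chords of $\xi_{a}$ cross at some vertex $a$ (i.e.\ $I_{\alpha}\cdot I_{\alpha'}=1$), the two strands must be routed through the extending polygon between interleaved pairs of polygon vertices, and since the polygon is a one-dimensional cycle the two connecting paths necessarily share edges/vertices; the lift is then not embedded, Corollary~\ref{cor:almost-embedded} does not apply, and indeed $\varepsilon(C_{j})$ need not equal $-1$. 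A figure-eight immersed orbit through a degree-four vertex with the crossing pairing has $\varepsilon=+1$ under an induced orientation (this is the discrete Whitney relation $N(C)+1=s(C)\ ({\rm mod}\ 2)$ discussed in Appendix~\ref{sec:topology}). Second, your global parity claim that $N(\gamma,{\bm\xi})$ is always even is also false: while two \emph{distinct} closed curves in the plane intersect an even number of times, a single immersed closed curve can have an odd number of self-intersections (the figure eight has exactly one), so the same bowtie-type decomposition gives $N(\gamma,{\bm\xi})=1$. Thus neither factor in $(-1)^{N(\gamma,{\bm\xi})}\prod_{j}\bigl(-\varepsilon(C_{j})\bigr)$ is separately trivial; what is true, and what suitability requires, is that the two factors compensate each other.

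The paper's proof is organized precisely to capture this compensation: it first shows (using the local identity (\ref{triplet-suitable}) for the induced triplet orientation) that $q(\gamma,{\bm\xi})$ is invariant under elementary local changes of the partition data, where an elementary move changes $N_{a}$, the triplet factors, and the number of components $n(\gamma,{\bm\xi})$ in mutually cancelling ways; it then reduces any ${\bm\xi}$ to a preferred, crossing-free partition for which $N(\gamma,{\bm\xi})=0$ and the components \emph{do} lift to embedded orbits, so that only at that final stage is Corollary~\ref{cor:almost-embedded} invoked. To repair your argument you would need to replace the two separate triviality claims by such an invariance-plus-reduction step (or by the Whitney-type identity relating $\varepsilon(C_{j})$ to the parity of the self-intersections of $C_{j}$ together with the evenness of pairwise crossings), which is essentially the content of the paper's Steps 1--3.
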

\begin{proof}
We are not presenting a complete detailed proof of Lemma~\ref{Lemma}, since it contains a large number of straightforward verification. We rather provide a sequence of steps that leads to a complete proof with brief explanations of these verifications involved on each step. The main idea is to demonstrate that for an induced orientation the binary function $q(\gamma,{\bm\xi})$, defined in the Definition \ref{def:suitable}, does not depend on the partition data ${\bm\xi}$ (Steps 1 and 2) and then explicitly compute it for some special choices of ${\bm\xi}$ (Step 3).

\begin{figure}
\centering
\includegraphics[width=0.4\textwidth,page=9]{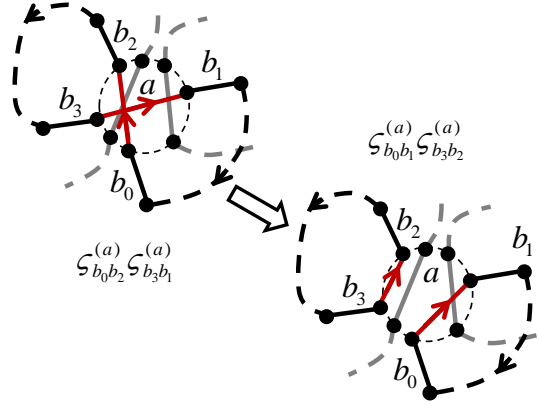}
\caption{ Example of the elementary transformation which eliminates an intersection of the segments $I_{\{b_0,b_2\}}$ and $I_{\{b_1,b_3\}}$ shown in red. The products of the triplet orientation factors involved remain the same according to Eq.~(\ref{triplet-suitable}),  while the total intersection number of the involved segments with any other (e.g. these shown in gray) remains unchanged modulo $2$. The change in the number of components,
$n(\gamma,{\bm\xi})$, by one is compensated by the identical (modulo $2$) change in the intersection index, $N_a(\gamma,{\bm\xi})$.\label{fig:varsigma_transf}}
\end{figure}

{\it Step 1.} We start with noting that the triplet component ${\bm\varsigma}$ of any induced orientation satisfies the following set of local constraints. Let $a$ be an arbitrary vertex of degree $4$ or higher, and $(b_{0},b_{1},b_{2},b_{3})$ be a set of its four arbitrary distinct neighbors (i.e., $b_{j}\sim a$), cyclically ordered in the counterclockwise direction with an arbitrary choice of the first neighbor. Then,
the following relation, between the pair-wise components of the left triplet orientations associated with $a$ and $(b_{0},b_{1},b_{2},b_{3})$ (see Definition \ref{def:extended}), holds
\begin{eqnarray}
\label{triplet-suitable} \varsigma_{b_{0}b_{2}}^{(a)}\varsigma_{b_{3}b_{1}}^{(a)}
=\varsigma_{b_{3}b_{2}}^{(a)}\varsigma_{b_{0}b_{1}}^{(a)}.
\end{eqnarray}
This can be verified directly. See Fig.~\ref{fig:varsigma_transf} for an illustration.

{\it Step 2.} We further introduce a set of local transformations of the decomposition data ${\bm\xi}$, illustrated in Fig.~\ref{fig:varsigma_transf}, and show that $q(\gamma;{\bm\xi})$ stays invariant under these transformations.
Modulo $2$ the change in $N(\gamma,{\bm\xi})$ is totally determined by the change of the intersection index of the edges involved in an elementary transformation. The change in the product of the edge and triplet factors involved in the product of all $\varepsilon$-factors in Eq.~(\ref{define-q}) boils down to the change in the product of two triplet factors involved in the elementary transformation. The change in the number $n(\gamma,{\bm\xi})$ of the decomposition components can be also analyzed locally, as illustrated in Fig.~\ref{fig:varsigma_transf}. Therefore, the invariance of $(-1)^{q(\gamma,{\bm\xi})}$ under an elementary transformation can be verified locally and boils down to the validity of Eq.~(\ref{triplet-suitable}).

Finally, we note that by performing the described elementary transformation each decomposition data ${\bm\xi}$ can be transformed to a preferred one, shown in Fig.~\ref{fig:pref}, where for each $a\in \gamma$ the adjacent edges are paired with their nearest neighbors (counterclockwise or clockwise) on $\gamma$. Obviously, any preferred decomposition satisfies the property $N_{a}(\gamma,{\bm\xi})=0$ for any $a\in \gamma$.

\begin{figure}
\centering
\includegraphics[width=0.4\textwidth,page=8]{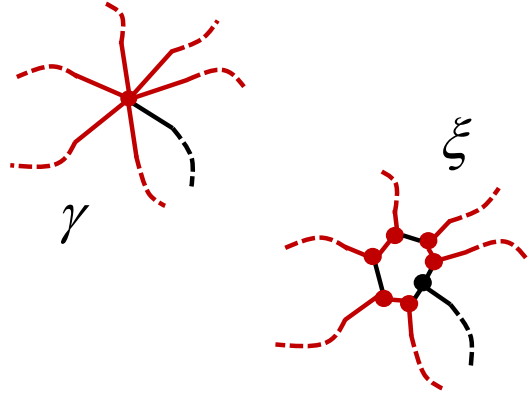}
\caption{ A single vertex element of an even generalized loop ($\mathbb{Z}_2$-cycle) $\gamma$ of ${\cal G}$ is shown in red at the top/left subfigure. Respected elements of the preferred decomposition data $\xi$ (containing no intersections) lifted to ${\cal G}^e$ is shown in red at the bottom/right subfigure.
\label{fig:pref}}
\end{figure}

{\it Step 3.} The previous step has reduced the proof to the case when the partition data of a decomposed cycle $(\gamma,{\bm\xi})$ is of a preferred form. We further lift the decomposition components $C_{j}(\gamma,{\bm\xi})$ to ${\cal G}^e$ by choosing the connection paths on the extending polygons in a way that they do not intersect with each other as shown in Fig.~\ref{fig:pref}, which is always possible for our special choice of ${\bm\xi}$. Therefore the immersed orbits $C_{j}(\gamma,{\bm\xi})$ satisfy the conditions of Corollary~\ref{cor:almost-embedded}, which results in $q(\gamma,{\bm\xi})=(-1)^{N(\gamma,{\bm\xi})}$. Also, as noted above, we have $N_{a}(\gamma,{\bm\xi})=0$ for all nodes $a\in\gamma$, so that $N(\gamma,{\bm\xi})=0$. \qed
\end{proof}

We would like to emphasize that Lemma~\ref{Lemma} not only proves the existence of suitable orientations, but rather provides a practical way of building them by reducing the problem to finding Kasteleyn edge orientations on the extended graph.

\section{Gauge Transformations}
\label{sec:gauge}

Gauge transformation approach discussed in \cite{06CCa,06CCb,07CC},
keeping the partition function invariant and the graph intact but modifying the factor functions,
also applies to the WBG model. On the other hand, and as shown above, the WBG model is det easy, and thus any binary model
derived from WBG models via application of a gauge transformation is also easy. This explains the logic of the main part of this Section --- to describe the extended family of binary models reducible via gauge transformations to WBG models. Section \ref{subsec:gauge-reminder} serves for a brief review of the gauge transformations. Sections \ref{subsec:dimer-example},\ref{subsec:ice},\ref{subsec:Ising} describe the gauge transformations reducing the three noticeable examples of dimer, ice and Ising models to respective WBG models.

Section \ref{subsec:X-matching} describes a complementary example of an $\#X$-matching det-tractable model, introduced in \cite{02Val,04Val,08Val}, which cannot be reduced (to the best of our knowledge) via a standard binary Gauge transformation to a WBG model, but it is still reducible to a WBG model on a contracted graph. The reduction to an easy model is shown here with the help of generalization to  intermediate model defined over $3$-ary (non-binary) alphabet.

\subsection{Gauge Transformations for general binary model. Reminder.}
\label{subsec:gauge-reminder}

Let us first define a general Edge-Binary Graphical (EBG) model on ${\cal G}$ with the
following partition function
\begin{eqnarray}
Z_{EBG}=\sum_{\bm\pi}\prod_{a\in{\cal G}_0}f_a({\bm\pi}_a),
\label{EBG}
\end{eqnarray}
where ${\bm\pi}=(\pi_{ab}=\pi_{ba}=0,1|\{a,b\}\in{\cal G}_1)$, and $\forall a\in{\cal G}_0$,
${\bm\pi}_a=(\pi_{ab}|b\in{\cal G}_0;\{a,b\}\in{\cal G}_1)$ and $f_a({\bm\pi}_a)$ are the cost
functions associated with vertices.

As shown and discussed in \cite{06CCa,06CCb,07CC} the partition function of the general EBG model (\ref{EBG})
is invariant under the set of linear, so-called Gauge transformations,
\begin{eqnarray}
f_a({\bm\pi}_a)\to \tilde{f}_a({\bm\pi}_a)=\sum_{{\bm\pi}_a'}\left(\prod\limits_{b\in
a}G_{ab}(\pi_{ab},\pi_{ab}')\right) f_a({\bm\pi}'_a), \label{Gauge1}
\end{eqnarray}
where the newly introduced (two per edge of the graph) Gauge matrices satisfy the following skew
orthogonality conditions
\begin{eqnarray}
\forall \{a,b\}\in{\cal G}_1:\quad \sum_\pi G_{ab}(\pi,\pi')G_{ba}(\pi,\pi'')=\delta(\pi',\pi'').
\label{skew}
\end{eqnarray}
With the gauge transformation applied, the partition function of the model becomes
\begin{eqnarray}
Z_{EBG}=\sum_{\bm\pi}\prod_{a\in{\cal G}_0}\tilde{f}_a({\bm\pi}_a)= \sum_{\bm\pi}\prod_{a\in{\cal
G}_0}\left(\sum_{{\bm\pi}_a'}\left(\prod\limits_{b\sim a}G_{ab}(\pi_{ab},\pi_{ab}')\right)
f_a({\bm\pi}_a')\right). \label{Z_EBG}
\end{eqnarray}
The expression in the middle of Eq.~(\ref{Z_EBG}) may be considered as a new graphical model.
Notice that if all gauge-matrices are non-singular (which is the case discussed in this manuscript)
transformation from (\ref{EBG}) to (\ref{Z_EBG}) is invertible,  i.e. for any set of
$G$-transformation one naturally defines the reverse $G^{-1}$, bringing us back to the original
formulation of Eq.~(\ref{Z_EBG}).

In \cite{06CCa,06CCb,07CC,08CCT} the Gauge transformation was discussed in the context of the
so-called Belief Propagation (BP)-gauge and the resulting Loop Calculus/Series,  where BP
constraints were imposed additionally to the general (and always maintained) skew-orthogonality
constraints (\ref{skew}). The main use of the gauge transformations in the preceding publications
consisted in fixing the gauge freedom even further according to the so-called Belief Propagation
(BP) conditions. The BP equations,  enforcing the respective gauge fixing conditions correspond to
fixed points of the Belief Propagation message-passing algorithm popular in physics, computer
science and information theory. This special, BP gauge, reduces the number of terms on the rhs of
Eq.~(\ref{Z_EBG}) requiring that only terms correspondent to $\pi=1$-edges forming a generalized
loop (i.e. subgraph with all vertices of degree at least two). The resulting series for partition
function consisting of generalized loop only, is called Loop Series.

In the following we show how some set of binary models on the planar graph,  which are known to be
easy from previous studies, are reduced to WBG forms under respective gauge transformation.

\subsection{Example: Dimer model as a WBG model}
\label{subsec:dimer-example}

Gauge transformation allows to restate the original model in a new basis, parameterized by $G$,
which is generally a useful freedom to use. To illustrate its general utility, we discuss how
applying a gauge transformation allows to restate the dimer model as a WBG model on
the same graph. First, let us restate DM in the general EBG form
\begin{eqnarray}
f^{(dm)}_a({\bm\pi}_a)=\prod_{b\sim a} w_{ab}^{\pi_{ab}/2}*\left\{\begin{array}{cc}
1,& \sum_{c\sim a}\pi_{ac}=1,\\ 0, & \mbox{otherwise}.
\end{array}\right.
\label{f_DM}
\end{eqnarray}
We will build here a gauge transformation around a valid dimer (perfect matching) configuration,
${\bm p}=(p_{ab}=0,1| \forall a:\quad \sum_{b\sim a} p_{ab}=1)$. Then introduce the following ${\bm p}$-dependent gauge ($2\times 2$ matrix with
elements $G_{ab}(\pi_{ab},\pi_{ab}')$ parametrically dependent on ${\bm p}$)
\begin{eqnarray}
G_{ab}=\left(\begin{array}{cc} 0 & 1\\ 1 & 0\end{array}\right)\quad \mbox{if}\quad
p_{ab}=p_{ba}=1,\quad \mbox{or else}\quad G_{ab}=\left(\begin{array}{cc} 1 & 0\\ 0 & 1\end{array}\right).
\label{dim-Gauge}
\end{eqnarray}
Obviously this gauge satisfies the skew-orthogonality condition (\ref{skew}). Furthermore, one finds that
\begin{eqnarray}
\tilde{f}^{(dm)}_a({\bm\pi}_a;{\bm p}_a)=\left\{
\begin{array}{cc}
\sqrt{w_{ab}}, & \sum_{c\sim a}\pi_{ac}=0\\
\sqrt{w_{ac}}, & \pi_{ab}=\pi_{ac}=p_{ab}=1,\quad\&\quad \sum_{d\sim a}^{d\neq b,c}\pi_{ad}=0,\\
0, & \mbox{otherwise},
\end{array}\right.
\label{tf_DM}
\end{eqnarray}
where $\{a,b\}$ is the special (unique for vertex $a$) matching edge, i.e. $p_{ab}=1$. Therefore DM model (\ref{f_DM}) is reduced to the WBG model with the following (triplet) weights:
\begin{eqnarray}
W^{(a)}_{bc}=\left\{
\begin{array}{cc} \omega_{ab}^{p_{ab}-1/2}\omega_{ac}^{p_{ac}-1/2}, & p_{ab}+p_{ac}=1\\
0,& \mbox{otherwise}.\end{array}
 \right.
 \label{DM-W}
\end{eqnarray}
Restating it in words, for the new representation an allowed (nonzero) configuration at any vertex corresponds to ones on  adjacent edges (there are two of them), of which one is necessarily ``active", i.e. present in the perfect matching and thus taking $\pi_{ab}=1$ value. After collecting the $w_{ab}$ pre-factors on the rhs of Eq.~(\ref{tf_DM}) into one
common multiplier, one observes that this modified model is in fact a particular Wick model (\ref{Z_WBG}). The
factor functions of the model is identical zero for configurations with at least two adjacent edges taking ``active"
$\pi=1$ value. Notice also that the WBG representation for the dimer model is
not unique as it can be build around any valid dimer configurations. Evidently any two valid
representations, described by the gauges $G_1$ and $G_2$ can be transformed to each other via
respective gauge transformations, $G_2 G^{-1}_1$.

We conclude this Subsection mentioning briefly yet another alternative reduction of the dimer model
to a WBG model, utilizing a topological transformation (specifically a contraction) of the original
graph ${\cal G}$. Consider a valid dimer configuration ${\bm p}_{0}$ as a set of edges
$\{a,b\}\in{\cal G}_{1}$ and contract each edge $\{a,b\}\in{\bm p}_{0}$ to a point. This results in
a new graph $\bar{{\cal G}}$, whose nodes $\bar{{\cal G}}_{0}={\bm p}_{0}$ are represented by the
edges that belong to the reference dimer configurations, and whose edges $\bar{{\cal G}}_{1}={\cal
G}_{1}\setminus{\bm p}_{0}$ are represented by the rest of the edges of ${\cal G}$. Each valid
dimer configuration ${\bm p}$ on ${\cal G}$ generates a configuration ${\bm\pi}$ of a binary edge
model on $\bar{{\cal G}}$ that has a form $\pi_{\alpha}=1$ if $\alpha\in{\bm p}$, and
$\pi_{\alpha}=0$, otherwise. (Here we consider an edge $\alpha\in \bar{{\cal G}}_{1}={\cal
G}_{1}\setminus{\bm p}_{0}\subset {\cal G}_{1}$ of $\bar{{\cal G}}$ as an edge of ${\cal G}$.) It
is easy to see that a vertex function $f_{a}({\bm\pi}_{a})$ for $a\in\bar{{\cal G}}_{0}$ is nonzero
only when all components of ${\bm\pi}_{a}$ are zeros (the case $a\in{\bm p}_{0}$), or when exactly
two components have the value one (otherwise), and also that the local weights satisfy the
conditions of the Wick binary model expressed in Eq.~(\ref{W-expressions}).

\subsection{Example: Ice Model for graphs of degree three}
\label{subsec:ice}

The ice model is defined in terms of orientations of edges on the graph. A valid
configuration/orientation is such that no vertex has adjusted edges all oriented onwards or
inwards. Here we consider the case when all vertices on ${\cal G}$ are of degree three. (Without
loss of generality our consideration here can be extended to vertices of degree not larger than
three.) Weight of any allowed orientation of the graph is unity. This model, $\#PL-3-NAE-ICE$
according to the complexity theory classification, was discussed in \cite{02Val,08Val} in lieu of
its reduction to dimer model via a holographic algorithm. Related models were studied extensively
in the mathematical physics literature \cite{67Lie_a,67Lie_b,07Bax}.

The ice model can be conveniently restated in terms of normal binary variables \footnote{An
alternative derivation, not requiring a graphical transformation, is discussed in Appendix
\ref{sec:gauge-extend}.}. This will require a graphical transformation consisting in breaking every
edge, $\{a,b\}$, by inserting a new auxiliary vertex $a-b$. Then, binary variables assigned to the
pair of new edges are $\pi_{a,a-b}=0, \pi_{b,a-b}=1$ if the direction of the arrow (for the
original orientation of the edge) was $a\to b$ and $\pi_{a,a-b}=1, \pi_{b,a-b}=0$ for the $b\to a$
arrow/orientation. The partition function of the respective binary model becomes
\begin{eqnarray}
&& Z_{ice}=\sum_{{\bm\pi}'}\left(\prod_{a\in{\cal G}_0}
f_a({\bm\pi}_a)\right)\left(\prod_{\{a,b\}\in{\cal G}_1} g_{a-b}(\pi_{a,a-b},\pi_{b,a-b})\right),
\label{ice1}\\
&& f_a({\bm\pi}'_a)=\left\{
\begin{array}{cc} 1, & \exists\  b,c\sim a,\quad \mbox{s.t.}\quad
\pi_{a,a-b}\neq\pi_{a,a-c}\\
0,&\mbox{otherwise}\end{array}\right.,\label{ice2}\\
&& g_{a-b}({\bm\pi}'_a)=\left\{
\begin{array}{cc} 1 & \pi_{a,a-b}\neq\pi_{b,a-b}\\
0,&\mbox{otherwise}\end{array}\right.,\label{ice2}
\end{eqnarray}
where
${\bm\pi}'={\bm\pi}\cup(\pi_{a,a-b}=0,1|\{a,b\}\in{\cal G}_1)$ and
${\bm\pi}'_a=(\pi_{a,a-b}=0,1|b\in \delta_{\cal G}(a))$.

Let us introduce the following gauge transformation on all edges of the newly formed extended graph
\begin{eqnarray}
G^{(ice)}_{a,a-b}=\frac{1}{\sqrt{2}}\left(\begin{array}{cc} 1 & 1\\ -1 & 1\end{array}\right),
\label{g-ice}
\end{eqnarray}
Factor functions of the ice model in the new basis becomes
\begin{eqnarray}
&& \tilde{f}_a(\pi_{a,a-1},\pi_{a,a-2},\pi_{a,a-3})=\frac{3}{\sqrt{2}}*\left\{
\begin{array}{cc} 1, & \pi_{a,a-1}=\pi_{a,a-2}=\pi_{a,a-3}=0\\
-1/3, & \sum_i\pi_{a,a-i}=2\\
0,&\mbox{otherwise}\end{array}\right.,\label{ice2a}\\
&& \tilde{g}_{a-b}({\bm\pi}'_a)=\left\{
\begin{array}{cc} 1, & \pi_{a,a-b}=\pi_{b,a-b}=0\\
-1, & \pi_{a,a-b}=\pi_{b,a-b}=1\\
0,&\mbox{otherwise}\end{array}\right.,
\label{ice3a}
\end{eqnarray}
where $(a,a-i)$ with $i=1,2,3$ marks three edges adjusted to vertex $a$. Accumulating the product
of the overall $3/\sqrt{2}$ terms into a common (normalization) multiplier, one immediately finds
that the resulting model is of the Wick's type (\ref{Z_WBG}) where
\begin{eqnarray}
\forall \alpha\neq \beta,\quad \alpha,\beta=a-1,a-2,a-3:\quad W^{(a)}_{\alpha,\beta}=-1/3; \quad \forall a\sim b:\quad W_{ab}^{(a-b)}=-1.
\label{W-ice}
\end{eqnarray}
Indeed, given that $\forall a\in{\cal
G}$ $|\delta_a({\cal G})|<4$, one simply does not have any $W^{(a)}$ terms in the ice model version
of (\ref{Z_WBG}) with $|\delta_a({\cal G})|>4$, and the only nontrivial vertices are these with
$|\delta_a({\cal G})|=2$. Notice, that this Wick feature would not apply to the ice model on planar
graphs with vertices of degree higher than three --- one can still use the gauge transformation but
the resulting transformed model will not maintain Wick's property.

Note that transformation of the ice model, identical to the one discussed above (e.g. utilizing
some alternative terminology) was also used as a show case of the  holographic algorithms of
Valiant in \cite{02Val,04Val,08Val}.

\subsection{Example: Ising Model}
\label{subsec:Ising}

We consider the classical Ising model without magnetic field:
\begin{eqnarray}
Z_I=\sum_{\bm\sigma}\exp\left(\sum_{\{a,b\}\in{\cal G}_0}\sigma_a J_{ab}\sigma_b\right),
\label{Ising-original}
\end{eqnarray}
where ${\bm\sigma}=(\sigma_a=\pm 1|a\in{\cal G}_0)$.
The original formulation (\ref{Ising-original}) is in terms of
binary variables associated with vertices, however the following EBG representation is more
appropriate for our purposes
\begin{eqnarray}
&& Z_{I}=\sum_{\bm\pi'}\left(\prod_{a\in{\cal G}_0} f_a({\bm\pi})\right)
\left(\prod_{\{a,b\}\in{\cal G}_1}g_{ab}(\pi_{a,a-b},\pi_{b,a-b})\right),\label{Is2}\\
&& f_a({\bm\pi})=\left\{\begin{array}{cc} 1, & \forall b,c\in\delta_{\cal G}(a)\quad \pi_{a,a-b}=\pi_{a,a-c},\\
0, & \mbox{otherwise}\end{array}\right.,\label{Is2}\\
&& g_{ab}(\pi_{a,a-b},\pi_{b,a-b})=\left\{\begin{array}{cc}
\gamma_{ab}, & \pi_{a,a-b}=\pi_{b,a-b}\\
\mu_{ab}, & \pi_{a,a-b}\neq\pi_{b,a-b}\end{array}\right.,
\label{Is3}
\end{eqnarray}
where $\gamma_{ab}=\exp(J_{ab})$,  $\mu_{ab}=\exp(-J_{ab})$ and we adopted notations of the previous Subsection for new (auxiliary) vertices breaking each edge of the original edge in two.

\begin{figure}[t]
\centering {\includegraphics[width=0.4\textwidth,page=20]{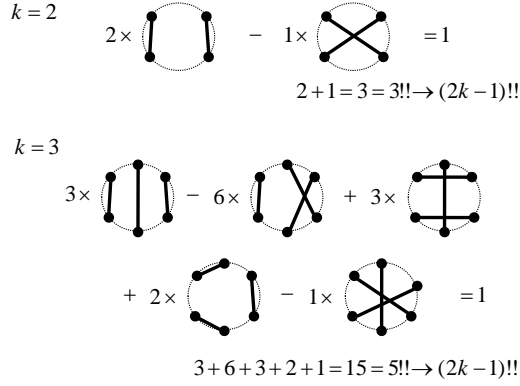}} \caption{Illustration
clarifying the Wick-feature of Eq.~(\ref{Is4}). \label{fig:Ising}}
\end{figure}

Applying the Gauge transformation (\ref{g-ice}), introduced above for the Ice model, to the Ising model one
arrives at the following factor functions in the new basis
\begin{eqnarray}
&& \tilde{f}_a({\bm\pi})=2^{-|\delta_{\cal G}(a)|/2 +1}*\left\{\begin{array}{cc} 1,
& \sum_{b\sim a}\pi_{a,b-a}=0\ \mbox{ mod }2,\\
0, & \mbox{otherwise}\end{array}\right.,\label{Is4}\\
&& \tilde{g}_{ab}(\pi_{a,a-b},\pi_{b,a-b})=(\gamma_{ab}+\mu_{ab})*\left\{\begin{array}{cc}
1, & \pi_{a,a-b}=\pi_{b,a-b}=0\\
(\gamma_{ab}-\mu_{ab})/(\gamma_{ab}+\mu_{ab}), & \pi_{a,a-b}=\pi_{b,a-b}=1\\
0, & \mbox{otherwise}\end{array}\right..
\label{Is5}
\end{eqnarray}
With the product of all the pre-factor terms, $2^{-|\delta_{\cal G}(a)|/2+1}$ and  $(\gamma_{ab}+\mu_{ab})$,
accounted for in the overall multiplier, the resulting model is one of the WBG models of
(\ref{Z_WBG}). This Wick feature of the modified graphical model is obvious for degree two vertices
of Eq.~(\ref{Is5}), and the only part which requires additional clarification concerns the relation
between a configuration of Eq.~(\ref{Is4}) with $\sum_{b\sim a}\pi_{a,b-a}>2$ and its pairwise
decomposition. Indeed, for a $\mathbb{Z}_{2}$-cycle  $\gamma$ and a vertex
$a\in\gamma_{0}$ of valence $|\delta_{\gamma}(a)|=2k$ with respect to $\gamma$, there are $(2k-1)!!$
local partitions with $\left((2k-1)!!+1\right)/2$ and $\left((2k-1)!!-1\right)/2$ of them having
modulo two intersection index equal to zero and one, respectively. Therefore according to the Wick
rules, the cumulative local contribution associated with the vertex $a$ is,
$\left((2k-1)!!+1\right)/2-\left((2k-1)!!-1\right)/2=1$, that is fully consistent with
Eq.~(\ref{Is4}). This simple combinatorial relation is illustrated in Fig.~\ref{fig:Ising} for
$k=2$ and $k=3$. Summarizing,  the Ising model is reduced to the WBG model with the following characteristics (W-weights):
\begin{eqnarray}
\forall b\neq c, b,c\sim a:\quad W^{(a)}_{b-a,c-a}=0;\quad \forall a\neq b:\quad W^{(a-b)}_{a,b}=\frac{\gamma_{ab}-\mu_{ab}}{\gamma_{ab}+\mu_{ab}}.
\label{WBG-Ising}
\end{eqnarray}

\subsection{$\#X$-matching model}
\label{subsec:X-matching}

The $\#X$-matching model is defined as follows \cite{02Val,04Val,08Val}:
On a planar bipartite graph, where $V_1$ and $V_2$ are bi-partitions of the full set of nodes and the nodes in $V_1$ all have degree $2$, consider matchings (monomer-dimer configurations) weighted in a way that all dimers (edges) have arbitrary weights, monomers on $V_1$ are all of unit weight and the weights of the monomers on $V_2$ are equal to minus the sum of the dimer weights of edges emanated from the vertex.

It is convenient to view the bipartite graph in the original definition of the $\# X$-matching model as a graph with the sets of nodes $V$ and edges $E$ represented by $V=V_{2}$ and $E=V_{1}$, respectively. We will adopt notations similar to these used in Section \ref{subsec:ice}, devoted to the Ice model, calling $a,b,\cdots$ the vertices of $V=V_2$ and $a-b,\cdots$ vertices belonging to $E=V_1$. Then the partition function $Z_{\# X}$ of the $\#X$-matching model has the following EBG representation on the graph ${\cal G}=({\cal G}_{0},{\cal G}_{1})=(V,E)$
\begin{eqnarray}
&& Z_{\# X}=\sum_{\bm\pi}\left(\prod_{a\in{\cal G}_0} f_a(\bm{\pi}_{a})\right)
\left(\prod_{\{a,b\}\in{\cal G}_1}g_{ab}(\pi_{a,a-b},\pi_{b,a-b})\right),\label{Z-X-1}\\
&& f_a({\bm\pi}_{a})=\left\{\begin{array}{cc} -\sum_{b\sim a}w_{ab}, & \forall b\sim a \quad \pi_{a,a-b}=0,\\
\sum_{b\sim a}w_{ab}\pi_{a,a-b}, & \sum_{b\sim a}\pi_{a,a-b}=1 \\
0, & \mbox{otherwise}\end{array}\right.,\label{Z-X-2}\\
&& g_{ab}(\pi_{a,a-b},\pi_{b,a-b})=\left\{\begin{array}{cc}
0, & \pi_{a,a-b}=\pi_{b,a-b}=1\\
1, & \pi_{a,a-b}=0 \;\;\; {\rm or}\;\;\; \pi_{b,a-b}=0 \end{array}\right.,
\label{Z-X-3}
\end{eqnarray}

Our task in this Subsection is to show via a gauge-transformation approach,  alternative to the match-gate approach of \cite{02Val,04Val,08Val}, that the partition function of the $\#X$-matching model, defined as the weighted number of the monomer-dimer configurations, is det-easy. To achieve this goal we will need to introduce, as in the above Subsection, a gauge transformation reducing the model to another model known to be det-easy.  However, the gauge transformation of this Subsection is principally different from these discussed in the rest of the paper, as here we will need to consider generalization from binary-setting to a $q$-ary, with $q=3$, alphabet. For this purpose we restate Eqs.~(\ref{Z-X-1}-\ref{Z-X-3}) as the following
$3$-ary Edge Graphical Model over the set of variables $\tilde{{\bm\pi}}=\{\tilde{\pi}_{a,a-b}\}$ that attain the values $\tilde{\pi}_{a,a-b}=0,\pm 1$:
\begin{eqnarray}
&& Z_{\# X}=\sum_{\tilde{\bm\pi}}\left(\prod_{a\in{\cal G}_0} \bar{f}_a(\tilde{\bm{\pi}}_{a})\right)
\left(\prod_{\{a,b\}\in{\cal G}_1}\bar{g}_{ab}(\tilde{\pi}_{a,a-b},\tilde{\pi}_{b,a-b})\right),\label{Z-HS-1}\\
&& \bar{f}_a(\tilde{\bm\pi}_{a})=\left\{\begin{array}{cc} 1, & \forall b\sim a \quad \tilde{\pi}_{a,a-b}=0,\\
w_{ab}\tilde{\pi}_{a,a-b}, & \tilde{\pi}_{a,a-b}=\pm 1 \;\;\; {\rm and} \;\;\; \forall c\ne b\;\; \tilde{\pi}_{a,a-c}=0 \\
0, & \mbox{otherwise}\end{array}\right.,\label{Z-HS-2}\\
&& \bar{g}_{ab}(\tilde{\pi}_{a,a-b},\tilde{\pi}_{b,a-b})=\left\{\begin{array}{cc}
0, & \tilde{\pi}_{a,a-b}=\tilde{\pi}_{b,a-b}=1\\
1, & \mbox{otherwise} \end{array}\right..
\label{Z-HS-3}
\end{eqnarray}
To see equivalence between the binary and $q$-ary representations let us walk backwards, from  Eqs.~(\ref{Z-X-1}-\ref{Z-X-3}) to Eqs.~(\ref{Z-HS-1}-\ref{Z-HS-3}), and associate
with any configuration $\tilde{\bm\pi}$ of our $3$-ary model a binary configuration ${\bm\pi}$ by setting $\pi_{a,a-b}=1$ when $\tilde{\pi}_{a,a-b}=1$, and $\pi_{a,a-b}=0$ when $\tilde{\pi}_{a,a-b}=0,-1$. Then, we do partial summation over all $\tilde{\pi}_{a,a-b}=0,-1$ correspondent to $\pi_{a,a-b}=0$. In a general case the result of such partial summation would be non-local with respect to the graph. However, in our special case the resulting model happens to be graph-local. Indeed, the weights $\bar{g}_{ab}(\tilde{\pi}_{a,a-b},\tilde{\pi}_{b,a-b})$ in Eq.~(\ref{Z-HS-3}) actually depend on the reduced variables $\pi_{a,a-b}$ and $\pi_{b,a-b}$ and the partial summation can be performed for each node $a\in{\cal G}_{0}$ independently. Finally, we arrive at the binary formulation of Eqs.~(\ref{Z-X-1}-\ref{Z-X-3}).

One may wonder why changing from binary to $3$-ary representation is useful? In fact it is very useful because of a greater freedom of further transformation the $3$-ary representation offers. Indeed, we will see below that relating $\tilde{\pi}$ variables to $\pi$  variables in a new way allows to reduce the $3$-ary model to the already known det-easy model (specifically, a dimer model, already shown reducible to a WBG model in Section \ref{subsec:dimer-example}).

Namely, we set $\pi_{a,a-b}=1$ when $\tilde{\pi}_{a,a-b}=\pm 1$, and $\pi_{a,a-b}=0$ when $\tilde{\pi}_{a,a-b}=0$. The resulting binary model turns out to be the dimer model on the reduced version of ${\cal G}$ (no half-edges only full edges) with the dimer weights $-w_{ab}w_{ba}$ for the $\{a,b\}$ edges. This follows from the following arguments. Consider a link $\{a,b\}$. If $\pi_{a,a-b}=1$ and $\pi_{b,a-b}=0$ we have $\tilde{\pi}_{a,a-c}=0$ for all $c\sim a$ with $c\ne b$, and also $\bar{g}_{ab}(\tilde{\pi}_{a,a-b},\tilde{\pi}_{b,a-b})=1$, so that the partial summation over $\tilde{\pi}_{a,a-b}=\pm 1$ can be performed independently, which results in $0$. If $\pi_{a,a-b}=\pi_{b,a-b}=1$, we have $\tilde{\pi}_{a,a-c}=0$ for all $c\sim a$ with $c\ne b$, and also $\tilde{\pi}_{b,b-c}=0$ for all $c\sim b$ with $c\ne a$. This allows the summations over $\pi_{a,a-b}=\pm 1$ and $\pi_{b,a-b}=\pm 1$ to be performed independently, thus resulting in the factors of $0$ for $\pi_{a,a-b}=\pi_{b,a-b}=1$, and $w_{ab}w_{ba}\pi_{a,a-b}\pi_{b,a-b}$, otherwise.  QED.

It is worth to note that in terms of the language of loop towers for $q$-ary models, developed in our earlier work on loop calculus for $q$-ary alphabets \cite{07CC}, the partial re-summation based on partitioning the alphabet into one special letter ($\tilde{\pi}=0$) and the rest ($\tilde{\pi}=\pm 1$) should be viewed as a partial gauge fixing that corresponds to the lowest level of the loop tower.

\section{Discussion and Future Plans}
\label{sec:concl}

In the present manuscript we have identified a broad class of models that allow for an easy
solution on planar graphs. Our approach is based on the gauge invariance of vertex models on
graphs that has been implemented in our previous work in the context of Belief Propagation and Loop
Series and the notion of a Wick vertex model, referred to as WBG. The WBG models are described in
terms of an arbitrary choice of weights associated with edges of the graph, and explicit construction of the high-degree (even coloring) vertex weights from the respective edge weights according to the so-called Wick rules. We have established an equivalence
between the WBG models and the Gaussian fermion (Grassmann variables based)  models on the same graph,
referred to as the VG$^3$models (Theorem~\ref{Theorem}).  The models are ``easy" as their partition
functions are equal to Pfaffians of square matrices of the size equal to twice the number of edges of the graph.
We have also demonstrated that the dimer model that constitutes the ``solving tool'' of the holographic algorithm of Valiant
\cite{02Val,04Val,08Val} is gauge equivalent to a WBG model. The other two components of the
holographic approach include graphical transformations (extensions of the original graph) and
linear transformations, which can be described as particular cases of gauge transformations using
the language adopted in this manuscript. If one represents the partition function of the dimer
model on the extended graph, obtained by using the gadgets of the holographic algorithm, as a
Gaussian Grassmann integral and integrate over the variables related to the new vertices that
result from the graph extension, the result will be represented by a Gaussian Grassman integral on
the original graph. Therefore, our approach provides an explicit and invariant description of the
models that can be treated using the holographic algorithm as the models gauge equivalent to WBG
models. It also provides with an alternative way to obtain an easy solution, while sticking to the
original graph all the time.

Some problems we plan to address in the future, extending on the results/approach described in the
manuscript, are:
\begin{itemize}
    \item Generalize the approach to account for det-easy planar graphical models defined over $q$-ary alphabet with $q>2$. As illustrated in Subsection \ref{subsec:X-matching}, there exist an interesting new class of more general $q$-ary models which are reducible to det-easy binary graphical models.

    \item Use the described hierarchy of easy planar models as a basis for efficient variational
    approximation of planar but difficult problems, and possibly more generally non-planar models. Then, on the next level, build a controlled perturbative corrections to the variational result. Notice, that this approach may also be useful for building efficient variational matrix-product state wave functions for quantum planar models, e.g. in the spirit of \cite{06VC}.

    \item Analysis of Wick Gaussian models on surface graphs of nonzero genus. This work will extend the classical results \cite{63Kas,96RZ,99GLa,99GLb,00RZ,07CR,08CR} stating that partition function of dimer models on graphs embedded in surface of genus $g$, so-called surface graphs, is expressed as a sum over $2^{2g}$ Pfaffians, each correspondent to a ``spinor" structure parameterizing the equivalence classes of Kasteleyn orientations on the locally planar graphs. The approach developed in this manuscript allows for a relatively straightforward extension to the case of surface graphs. In particular, we plan to show in a forthcoming publication that Kasteleyn orientations on the extended graph ${\cal G}^e$ generate all $2^{2g}$ spinor structures on the embedding Riemann surface.

    \item Study Wick Gaussian models on non-planar but Pfaffian orientable or $k$-Pfaffian orientable graphs (thus any dimer model on surface graph of genus $g$ is $2^{2g}$-Pfaffian orientable),  also utilizing the new results from graph theory on the Pfaffian orientability \cite{06Tho}.

    \item For the case of a general EBG model we will apply the statistical variational principle to build the ``best'' approximation in a form of a WBG model and address the problem of finding efficient ways to account for the corrections.
\end{itemize}

\section{Acknowledgments}

We are thankful to David Gamarnik for attracting our attention to the holographic algorithms of \cite{02Val,04Val,08Val}, to John R. Klein, Jason Johnson and Vicenc Gomez for useful discussions and comments, and also to an anonymous Referee whose comments force us to reconsider (and hopefully improve) the presentation sequence. This material is based upon work supported by the National Science Foundation under
CHE-0808910 (VC) and CCF-0829945 (MC via NMC). The work at LANL was carried out under the auspices of the National Nuclear Security Administration of the U.S. Department of Energy at Los Alamos National Laboratory under Contract No. DE-AC52-06NA25396. MC also acknowledges partial support of KITP at UCSB where some part of this work was done.

\bibliographystyle{apsrev}
\bibliography{Planar,Gauss,BP_review}

\appendix

\section{Topology behind the door: Self-intersection invariant of immersions and spinor structures}
\label{sec:topology}

The main result of this manuscript is represented by Theorem~\ref{Theorem}. The proof of the
Theorem rests on a computation of the relevant Gaussian Grassmann integral using a certain
expansion, combined with the Wick Theorem and Lemma~\ref{Lemma}.  In this
Appendix we present a brief qualitative discussion (without proofs) of the topology that stands
behind the presented combinatorial proofs presented in the main body of the manuscript. This will connect our results to the approach developed
in \cite{07CR,08CR} for the dimer model on surface graphs, and also provide some ideas on how
our results on the equivalence between the Wick and binary models can be extended to the surface
graph case (the details of this description and related proofs will be published elsewhere).

A topological nature of the equivalence between the WBG and G$^3$ models can be seen by comparing
Eqs.~(\ref{Z-W-expand}) and (\ref{Z-F-expand}) that play a major role in the proof of
Theorem~\ref{Theorem}. Each contribution in both sums can be represented as a product of the
individual contributions labeled by the immersed orbits $C_{j}(\gamma,{\bm\xi})$. The individual
contribution associated with an immersed orbit $C$ has the same absolute value for both
expansions, they differ by the sign factors only. For the G$^3$ model the sign associated with an
immersed orbit $C$ is $-\varepsilon(C)=(-1)^{s(C)+1}$ (where the last expression defines $s(C)$ as a function of $\varepsilon(C)$ introduced in the main text), in the WBG model case the sign is
$(-1)^{N(C)}$ with $N(C)$ being the number of self-intersections of $C$. Note that the
intersections between different orbits in the expansion of $\gamma$, although apparently affecting
the sign factor in Eq.~(\ref{Z-W-expand}), can be actually eliminated from the consideration, since
in the planar case two orbits always intersect at an even number of points. In this appendix we
will argue that $s(C)$ can be associated with the unique spinor structure on $\mathbb{R}^{2}$, and
the equivalence between the models originate from the topological formula, $N(C)+1=s(C) \; ({\rm
mod}\; 2)$, that relates the self-intersection number to the spinor structure.

We proceed with more precise formulations. Note that the statement of Lemma~\ref{Lemma} can be
interpreted in the following way: The function $q(\gamma,{\bm\xi})$ of $\gamma$ and ${\bm\xi}$
defined by Eq.~(\ref{define-q}) can be viewed as a $\mathbb{Z}_{2}$-valued function defined on the
sets of immersed orbits, represented by $C_{j}(\gamma,{\bm\xi})$, and it satisfies the property
$q(\gamma,{\bm\xi})=0$.
$q(\gamma,{\bm\xi})$ consists of three contributions: the number
$N(\gamma,{\bm\xi})$ of intersections and self-intersections of the involved orbits (naturally
modulo two), the number $n(\gamma,{\bm\xi})$ of the orbits involved, and the contributions
$s\left(C_{j}(\gamma,{\bm\xi})\right)$ that arise from the individual orbits, where $s\equiv (\log(\varepsilon)/(i\pi)$.
What we intend to illustrate is that by combining the modulo two numbers of
intersections and the modulo two number of orbits in a decomposition of a planar
$\mathbb{Z}_{2}$-cycle $\gamma$ we arrive at a topological invariant
$p(\gamma,{\xi})=N(\gamma,{\bm\xi})+n(\gamma,{\bm\xi})\in\mathbb{Z}_{2}$, referred to as the
self-intersection invariant, whereas the individual orbit contributions
$s\left(C_{j}(\gamma,{\bm\xi})\right)$ and the corresponding sign factors $\varepsilon=(-1)^{s}$
are associated with a spinor structure on the plane $\mathbb{R}^{2}$, the original planar graph and
its extension are imbedded into. Combining the self-intersection invariant with the spinor
structure related contributions we arrive at the $\mathbb{Z}_{2}$-invariant $q=p+s$ that depend on
the planar $\mathbb{Z}_{2}$-cycle $\gamma$ only, and $q=0$, since all cycles in $\mathbb{R}^{2}$
are contractable. This will be done by interpreting the orbits as closed trajectories of a free
particle in the plane, also referred to as a free planar particle.

\subsection{Immersions, self-intersection invariant and spinor structures on $\mathbb{R}^2$}
\label{subsec:spin-immersions}

\begin{figure}
\centering \subfigure[An example of a smooth closed trajectory. Parametrization chosen for plotting
is ${\bm r}=(r_1(t),r_2(t))=(\sin(t),\sin(2t))$, where $t$ changes from $0$ to
$2\pi$.]{\includegraphics[width=0.25\textwidth,page=1]{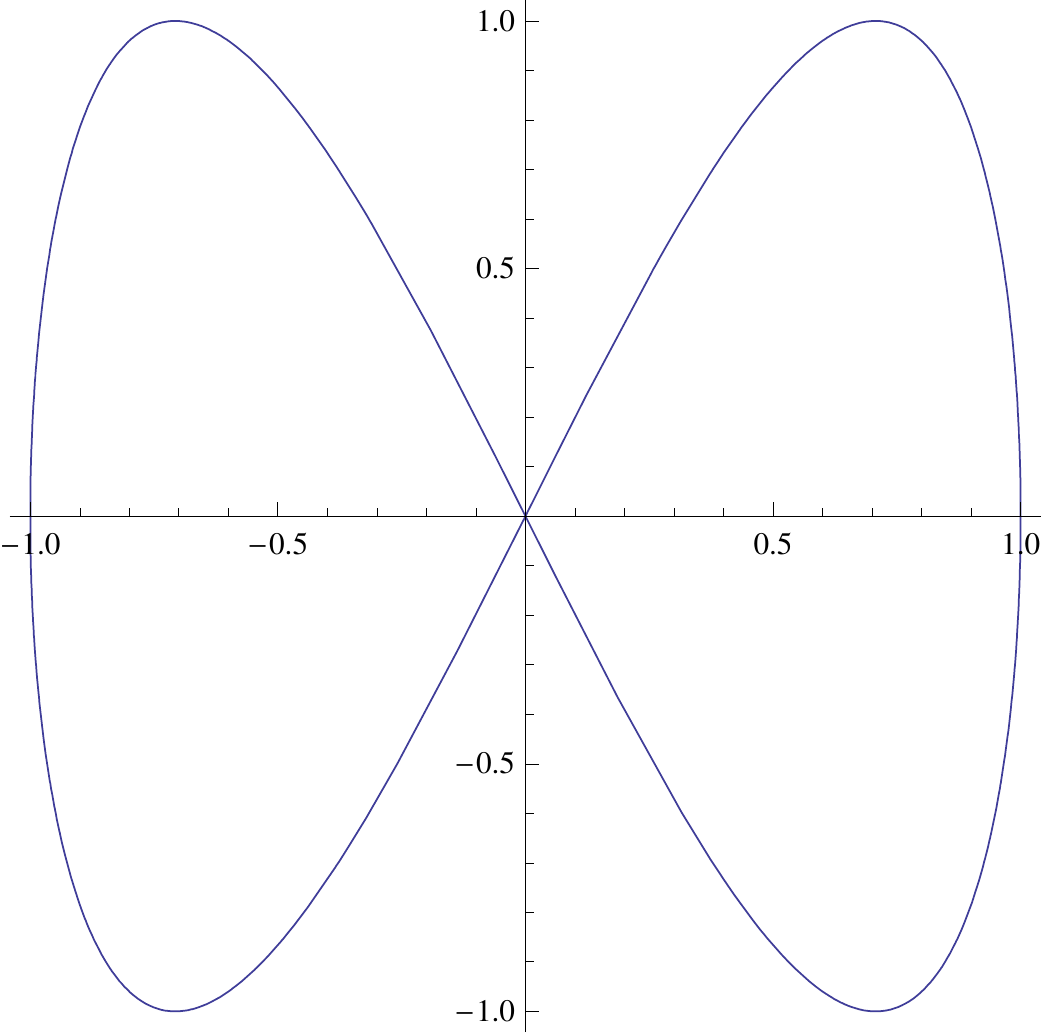}} \subfigure[Direction of velocity
(angle), $\varphi(t)=\arctan(2\cos(2t)/\cos(t))$, for a particle moving along the immersed orbit
from (A), thus ${\bm r}$ are coordinates of the particle in $\mathbb{R}^{2}$, as a function of
``time" $t$. The invariant, measuring total number of angular rotations, is $0$ for this
example.]{\includegraphics[width=0.35\textwidth,page=2]{lisagu.pdf}} \subfigure[Trajectory of the
particle in the phase-space, ${\bm x}=(r_1(t),r_2(t),\varphi(t))\in M^{3}=\mathbb{R}^{2}\times
S^{1}$.]{\includegraphics[width=0.15\textwidth,page=3]{lisagu.pdf}} \caption{Illustration of an
immersion in $\mathbb{R}^{2}$ of an orbit with one crossing and its phase-space (particle)
counterpart.\label{fig:lisagu.pdf}}
\end{figure}

This Subsection introduces some useful objects that are new to the manuscript, as defined for
continuous spaces, e.g. $\mathbb{R}^2$, rather than for the graphical structure as in the main part
of the manuscript. This continuous spaces approach will be used in the following Subsection for a
topological illustration of Lemma \ref{Lemma}. The arguments presented there can be easily
converted into a proof, however, this task goes beyond the scope of this manuscript. We will start
with introducing a special type of smooth maps (trajectories) $C:S^{1} \to \mathbb{R}^{2}$,
referred to as {\it immersions} and denoted $C:S^{1} \looparrowright \mathbb{R}^{2}$. The main
reason for bringing the immersions to our discussion is that the immersed closed walks, that
represent the immersed orbits $C_{j}(\gamma,{\bm\xi})$ in the decomposition of a
$\mathbb{Z}_{2}$-cycle $\gamma$ on a planar graph ${\cal G}\subset \mathbb{R}^{2}$ in
Lemma~\ref{Lemma}, can be deformed in a natural way to immersions $\bar{C}_{j}:S^{1}
\looparrowright \mathbb{R}^{2}$ and studied using topological methods. This rationalizes the terms
{\it immersed closed walk} and {\it immersed orbit}, we have introduced in the context of graphs.
We further discuss the self-intersection invariant $p:{\rm Imm}(S^{1},\mathbb{R}^{2}) \to
\mathbb{Z}_{2}$ that associates with an immersion $C:S^{1} \looparrowright \mathbb{R}^{2}$ the
number of self-intersections modulo two. The self-intersection invariant provides a topological
interpretation of the intersection number $N(\gamma,{\bm\xi})$ that appears in Lemma~\ref{Lemma}.
In the context of the self-intersection invariant we will refer to Smale-Hirsch-Gromov (SHG) theorem \cite{71Hae} that
reduces the problem of studying the topological properties of immersions $C:S^{1} \looparrowright
\mathbb{R}^{2}$ to a much simpler problem of studying their phase-space counterparts $f(C):S^{1} \to
\mathbb{R}^{2}\times S^{1}$. In particular, the SHG theorem
 rationalizes a natural
appearance of the phase space $M^{3}=\mathbb{R}^{2}\times S^{1}$. We proceed with introducing the
notion of the spinor structure on $\mathbb{R}^{2}$ viewed as a binary function $s:LM^{3}\to
\mathbb{Z}_{2}$ that associates with any phase-space trajectory $\tilde{C}:S^{1}\to M^{3}$ a number
$s(\tilde{C})\in\mathbb{Z}_{2}$ and can be also defined for the immersions $C:S^{1} \looparrowright
\mathbb{R}^{2}$ by $s(C)=s(f(C))$. The spinor structure provides a topological interpretation of
the factors
$\varepsilon\left(C_{j}(\gamma,{\bm\xi})\right)=(-1)^{s\left(\bar{C}_{j}(\gamma,{\bm\xi})\right)}$
in terms of the value of the spinor structure on the relevant immersions. Finally, we briefly
discuss a topological relation between the spinor structure and self-intersection invariant that
stands behind the statement of Lemma~\ref{Lemma}.

We start with introducing a concept of an immersion. To that end we consider a particle moving
smoothly over a closed trajectory in a plane two-dimensional space $\mathbb{R}^2$, where thus
smooth planar trajectory $C:S^{1}\to \mathbb{R}^{2}$ parameterized by ${\bm r}(t)=(r_1(t),r_2(t)$
where the ``time" $t$ belongs to $S^{1}$. The trajectory is called an {\it immersion}, which is
denoted by $C:S^{1} \looparrowright \mathbb{R}^{2}$ if the velocity stays non-zero, i.e., $\dot{\bm
r}(t) \ne 0$, at all $t$. In the following the key space will be the iso-energetic shell
$M^{3}=\mathbb{R}^{2}\times S^{1}$ of a free planar particle so that for ${\bm
x}=(r_1(t),r_2(t),\theta(t))\in \mathbb{R}^{2}\times S^{1}$, where the angular variable $\theta\in
S^{1}$ describes the velocity direction (i.e., the normalized particle velocity) $\dot{{\bm
r}}(t)\left|\dot{{\bm r}}(t)\right|^{-1}\in S^{1}$. Following the terminology, commonly accepted in
mathematical physics, we will also refer to the iso-energetic shell as the particle phase space.
For a smooth planar trajectory $C$ with a non-zero at all times velocity, i.e., an immersion, we
denote by $f(C)$ its phase-space counterpart with the normalized velocity, which clearly forms a
trajectory in $M^{3}$ (see Fig.~\ref{fig:lisagu.pdf} for an illustrative example).

Our next step, as outlined above, is to describe the self-intersection invariant, defined as the
modulo two number of self-intersections $i(C)=N(C)\; ({\rm mod}\; 2)$ of a smooth planar trajectory
$C:S^{1}\to \mathbb{R}^{2}$. It is intuitively clear that $i(C)$ is a topological invariant only
for immersions $S^{1}\looparrowright \mathbb{R}^{2}$, i.e., it stays the same only if we deform a
trajectory so that it stays an immersion all the time, since in this case the self-intersections
can appear and disappear in pairs only \footnote{If the condition of having everywhere non-zero
velocity is relaxed, this would be not true anymore. This can be illustrated using a simple
example. One can imagine a closed trajectory of the shape of ``figure eight'' from
Fig.~\ref{fig:lisagu.pdf}, being deformed in a way that one half of it becomes smaller turning into a
point with a cusp, followed by smothering the cusp, which results into a circle. The number of
self-intersections changes by one. Note that at the stage of the deformation when we have a cusp,
the velocity of the trajectory at a cusp turns to zero. This means that the map $C:S^{1}\to M^{2}$
at this stage is not an immersion, which actually allows the parity of the self-intersection number
to be changed.}. Immersions can be conveniently studied using the SHG theorem that can be
applied for our purposes in the following way. Interpret the procedure of associating with an
immersion $C$ its phase space counterpart $f(C)$, as described above, as a map $f:{\rm
Imm}(S^{1},\mathbb{R}^{2})\to LM^{3}$ from the space of immersion of a closed trajectory on a plane
with everywhere nonzero velocity to the space $LM^{3}$ of closed trajectories in
$M^{3}=\mathbb{R}^{2}\times S^{1}$. The SHG theorem, applied to our case, claims that $f$
is a weak homotopy equivalence. In particular it implies that given two immersions, $C,C'\in {\rm
Imm}(S^{1},\mathbb{R}^{2})$, to answer the question whether $C$ can be deformed to $C'$, while {\it
staying an immersion during the whole deformation}, is equivalent to finding out whether $f(C)$ can
be deformed to $f(C')$ {\it without any further restrictions}. Note that the SHG theorem
explicitly classifies the immersions $S^{1}\looparrowright \mathbb{R}^{2}$: since the coordinate
space $\mathbb{R}^{2}$ is contractible, only the velocity component $\theta_{C}:S^{1}\to S^{1}$ of
the phase-space trajectory $f(C):S^{1} \to \mathbb{R}^{2}\times S^{1}$ matters, with $\theta_{C}$
being fully homotopically determined by its degree ${\rm deg}(\theta_{C})\in\mathbb{Z}$ that
describes the total number of rotations the velocity direction makes over the whole trajectory. In
particular two immersions $C,C'\in {\rm Imm}(S^{1},\mathbb{R}^{2})$ are equivalent if and only if
${\rm deg}(\theta_{C})={\rm deg}(\theta_{C'})$, i.e. the direction of velocity (the angle) makes
the same number of rotations over both trajectories.

Consider concatenation of two trajectories $C_{1}\star C_{2}$ that share the starting point point
in $\mathbb{R}^2$, understood as trajectory of a particle cycling $C_2$ first and then going over
the $C_1$. The concatenation results in
\begin{eqnarray}
\label{concatenate-loops} i(C_{1}\star C_{2})=i(C_{1})+i(C_{2})+C_{1}\cdot C_{2}+1,
\end{eqnarray}
where $C_{1}\cdot C_{2}$ denotes the modulo two intersection index (the modulo two number of
intersection points) of $C_{1}$ and $C_{2}$. Eq.~(\ref{concatenate-loops}) reflects the fact that
the number of self-intersections of a concatenation of $C_{1}$ with $C_{2}$ is the number of
self-intersections of $C_{1}$ plus the number of self-intersections of $C_{2}$ plus the number of
intersections of $C_{1}$ with $C_{2}$ plus one (the self-intersection at the point of
concatenation), which, as opposed to the self-intersection index, is well-defined for any
continuous closed trajectories, not necessarily immersions. Note that for the planar case discussed
here $C_{1}\cdot C_{2}=0$ (intersections  appear and disappear in pairs), stated differently it
reflects the fact that all cycles in $\mathbb{R}^{2}$ are contractable. Nevertheless, we keep these
contributions, since for our application it will be convenient to treat intersections and
self-intersections on equal footing. By defining $p(C)=i(C)+1$ we arrive at
\begin{eqnarray}
\label{quadratic-form} p(C_{1}\star C_{2})=p(C_{1})+p(C_{2})+C_{1}\cdot C_{2}.
\end{eqnarray}

By the SHG theorem the topological classes of immersions $S^{1}\looparrowright
\mathbb{R}^{2}$ are fully described by the topological classes of their phase space counterparts
$f(C):S^{1}\to M^{3}\cong \mathbb{R}^{2}\times S^{1}$, the latter being labeled by the first
homology group $H_{1}(M^3)\cong \mathbb{Z}$, counting the number of rotations of the velocity
orientation along the immersion. The self-intersection invariant, therefore, is also defined in
terms of the natural factorization $p:\mathbb{Z}\to\mathbb{Z}_{2}$ (i.e., it can be viewed as a
composition $LM^{3}\to H_{1}(M^{3})\to \mathbb{Z}_{2}$ where the left map associates with a closed
phase space trajectory the number of the velocity rotations), and satisfies the property
\begin{eqnarray}
\label{quadratic-form-2} p(\gamma_{1}+\gamma_{2})=p(\gamma_{1})+p(\gamma_{2})+g(\gamma_{1})\cdot
g(\gamma_{2})=p(\gamma_{1})+p(\gamma_{2}),
\end{eqnarray}
where $g(\gamma)$ is a closed trajectory in $\mathbb{R}^{2}$ that is obtained by reduction
(projection) from $M^{3}$ by simply ignoring information about orientation of the particle
velocity. $g(\gamma_{1})\cdot g(\gamma_{2})$ stands for the intersection index (mod $2$) of the two
reduced closed trajectories, and for the planar $\mathbb{R}^{2}$ case considered in the paper it is
equal to zero, i.e., $g(\gamma_{1})\cdot g(\gamma_{2})=0$.

Consider a composite trajectory defined as a union of smooth closed and not necessarily
intersecting trajectories (immersions). In this case we define
$p\left(\sum_{j=1}^{n}C_{j}\right)=i\left(\bigsqcup_{j=1}^{n}C_{j}\right)+n$, which generalizes the
definition of a single smooth trajectory (an immersion). Note that this definition ensures the
property given by Eq.~(\ref{quadratic-form-2}). Summarizing, we have an invariant $p(\gamma)$
defined on $H_{1}(M^{3})\cong \mathbb{Z}$ that satisfies Eq.~(\ref{quadratic-form-2}) and if a
complex trajectory is decomposed into a set of smooth closed trajectories we derive
\begin{eqnarray}
\label{quadratic-form-3}
p(\gamma)=p\left(\bigsqcup_{j=1}^{n}C_{j}\right)
=\sum_{j=1}^{n}i(C_{j})+\sum_{j<k}g(C_{j})\cdot
g(C_{k})+n=\sum_{j=1}^{n}p(C_{j})+\sum_{j<k}g(C_{j})\cdot g(C_{k})=\sum_{j=1}^{n}p(C_{j}).
\end{eqnarray}
This completes our discussion of the self-intersection invariant.

We are now in a position to introduce the notion of a spinor structure, as outlined in the
beginning if this subsection. A spinor structure can be interpreted as a map $s:LM^{3}\to
\mathbb{Z}_{2}$ that associates with any phase-space trajectory $\tilde{C}:S^{1} \to
\mathbb{R}^{2}\times S^{1}$ its parity $s(\tilde{C})=0,1$. The spinor structure map $s$ should
satisfy the following two conditions: (a) Any continuous deformation of $\tilde{C}$ does not change
$s(C)$, or more accurately (and generally) if $\tilde{C}\sqcup\tilde{C'}$ form a boundary of a two-dimensional
domain then $s(\tilde{C})=s(\tilde{C'})$; and (b) $s(\tilde{C}_{{\bm r}})=1$ for $\tilde{C}_{{\bm
r}}(t)=({\bm r},t)$ being a phase space trajectory with the constant position ${\bm r}$ and the
velocity making a full rotation. Obviously, the spinor structure can be also viewed as a map
$s:{\rm Imm}(S^{1},\mathbb{R}^{2})$ if we define $s(C)=s(f(C))$ for $C:S^{1}\looparrowright
\mathbb{R}^{2}$. For this interpretation the condition (b) means $s(C_{{\bm r}})=1$ where the
immersion $C_{{\bm r}}$ represents a circle on $\mathbb{R}^2$, centered at ${\bm r}$ with a very
small radius. Obviously, according to the Stokes theorem applied to the phase space, any spinor
structure $s(C)$ can be represented as
\begin{eqnarray}
\label{define-Dirac-phase} s(C)=s(f(C))=\int_{f(C)}A_{j}({\bm x})dx^{j}, \;\;\; \int_{\{{\bm
r}\}\times S^{1}}A_{j}({\bm x})dx^{j}=1,
\end{eqnarray}
where ${\bm A}$ is an Abelian curvature-free vector potential ${\bm A}$, in the phase space
$M^{3}=\mathbb{R}^{2}\times S^{1}$. The curvature-free condition reads
$F_{ij}=\partial_{i}A_{j}-\partial_{j}A_{i}=0$, thus expressing the condition (a) formulated above.
From the definitions of the spinor structure and the vector potential, one finds that the value
$s(C)$ of the spinor structure map on any immersion $C:S^{1}\looparrowright\mathbb{R}^2$ is given
by the modulo two total number of the particle velocity rotations along the closed trajectory (see
Fig.~\ref{fig:lisagu.pdf}c for an illustration). Note that for our planar case the conditions (a) and
(b) define a unique function $s$, which means that there is a unique spinor structure on
$\mathbb{R}^{2}$ (this is not true in the surface case). Also, this unique spinor structure can be
described by different gauge fields ${\bm A}$ that differ by gauge transformations $A_{j}\mapsto
A_{j}+\partial_{j}\phi$. This completes our discussion of the spinor structures.

\subsection{Topological interpretation of Lemma~\ref{Lemma}}
\label{subsec:lemma-top}

\begin{figure}[t]
\centering {\includegraphics[width=0.5\textwidth,page=21]{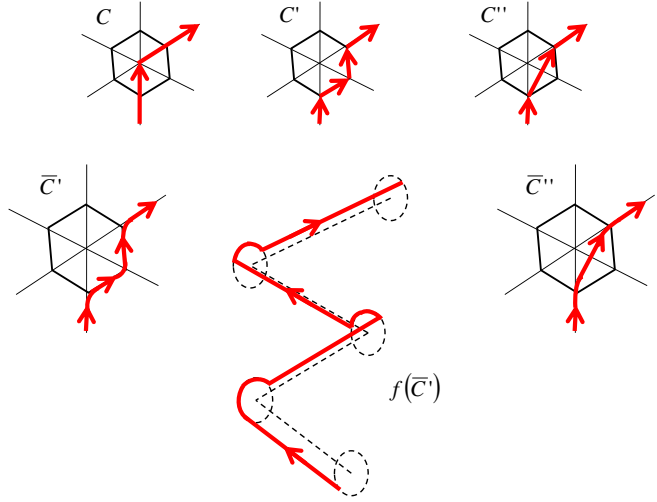}} \caption{Schematic illustrations
for details of the transformation procedure from an element of a directed orbit on a graph to its particle-trajectory smooth analog embedded into $\mathbb{R}^2$ and consequently into
$M^3$. Notice that if the direction of the orbit is reversed the trajectory $f(\bar{C}')$ in $M^3$ will change. Underlying spinor structure is not shown on the figure.
\label{fig:smoothing}}
\end{figure}

To demonstrate the equivalence of the combinatorial and topological definitions of $\varepsilon(C)$
we view a directed immersed orbit $C'$ on ${\cal G}^e$, associated with an immersed orbit on ${\cal G}$ as
a continuous piece-wise smooth trajectory in $\mathbb{R}^{2}$. See Fig.~\ref{fig:smoothing} for a schematic illustration of
the orbit-to-trajectory transformation.
In very small neighborhoods of the nodes
we deform $C'$ in an obvious way, so that we do not gain self-intersections in these small
neighborhoods, to obtain an immersion $\bar{C}':S^{1}\looparrowright \mathbb{R}^{2}$. In the small
neighborhoods of the nodes, where the deformations have been performed, the coordinate ${\bm r}$,
i.e., the $\mathbb{R}^{2}$ component of the corresponding phase-space trajectory
$f(\bar{C}'):S^{1}\to M^{3}$ is almost constant, whereas the velocity orientation $\theta$ makes a
finite rotation in the counterclockwise or clockwise directions for the left and right turns,
respectively. Following $f(\bar{C}')$ on the deformation of $\bar{C}'$ back to $C'$ we obtain a
continuous piece-wise smooth trajectory in $M^{3}$ that represents a orbit in $M^{3}$, denoted with
some minor abuse of notation $f(C'):S^{1}\to M^{3}$, whose smooth pieces $B_{ab}$ and
$B_{ac}^{(b)}$ correspond to the edges and the nodes (velocity turns at the nodes) of $C'$,
respectively. The value of $s\left(f(C')\right)$ can be considered as a sum of the contributions
from the smooth pieces, labeled by the edges and nodes of $C'$, according to
Eq.~(\ref{define-Dirac-phase}). Indeed, we denote by $B_{ab}$ a smooth path in $\mathbb{R}^{2}$
that represents the edge of ${\cal G}^e$ in the embedding ${\cal G}\subset \mathbb{R}^{2}$,
whereas for a triplet $a\to b\to c$ of ${\cal G}^e$ we denote by $B_{ac}^{(b)}$ the shortest path
on $\{{\bm r}_{b}\}\times S^{1}$ that connects the directions of the velocities at ${\bm r}_{b}$ on
the paths $B_{ab}$ and $B_{bc}$, respectively. Then the aforementioned edge and vertex
contributions have the form
\begin{eqnarray}
\label{define-s-local} s_{ab}=s(f(B_{ab}))=\int_{f(B_{ab})}A_{j}({\bm x})dx^{j}, \;\;\;
s_{ac}^{(b)}=\int_{B_{ac}^{(b)}}A_{j}({\bm x})dx^{j},
\end{eqnarray}
respectively. Naturally, $\varepsilon(f(C'))=(-1)^{s(f(C'))}$ is given by the product of the edge
and node contributions and, provided the latter are represented by the Kasteleyn edge orientations
and left triplet orientations, which can be always achieved by a gauge transformation of the field
${\bm A},$ we reproduce the original combinatorial form, i.e.,
$\varepsilon_{C}(C)=\varepsilon_{T}(f(C'))$, where the subscripts $C$ and $T$ stand for
combinatorial and topological, respectively.

To evaluate $q(\gamma)$ for a $\mathbb{Z}_{2}$ cycle we decompose it into $n(\gamma,{\bm\xi})$
immersed orbits $C_{j}=C_{j}(\gamma,{\bm\xi})$ and consider $C_{j}$ and $C''_{j}$ as continuous
piece-wise smooth orbits in $\mathbb{R}^{2}$. We denote by $\bar{C}_{j}:S^{1}\looparrowright
\mathbb{R}^{2}$ and $\bar{C}''_{j}:S^{1}\looparrowright \mathbb{R}^{2}$ the immersions, obtained by
slight deformations of $C_{j}$ and $C''_{j}$ in small neighborhoods of the nodes of ${\cal G}$ and
${\cal G}^e$, respectively, using the procedure described above in the context of $C'$. It is
easy to see that $\bar{C}_{j}$, $\bar{C}'_{j}$, and $\bar{C}''_{j}$ can be deformed to each other
within the immersion space, in particular $f(\bar{C}_{j})$, $f(\bar{C}'_{j})$, and
$f(\bar{C}''_{j})$ represent homologically equivalent cycles in $M^{3}$. Therefore, we have
\begin{eqnarray}
\label{top-to-comb}
q(\gamma)&=&q\left(\bigsqcup_{j=1}^{n}C_{j}\right)=\sum_{j=1}^{n}q(\bar{C}_{j})+\sum_{j<k}C''_{j}\cdot
C''_{k}=\sum_{j=1}^{n}s\left(f(\bar{C}_{j})\right)+\sum_{j=1}^{n}p\left(f(\bar{C}''_{j})\right)
+\sum_{j<k}C''_{j}\cdot C''_{k} \nonumber \\ &=&
\sum_{j=1}^{n}s\left(f(\bar{C}'_{j})\right)+n+\left(\sum_{j=1}^{n}i\left(f(\bar{C}''_{j})\right)
+\sum_{j<k}C''_{j}\cdot C''_{k}\right)=\sum_{j=1}^{n}s\left(f(C'_{j})\right)+n+N,
\end{eqnarray}
where $N=N(\gamma,{\bm\xi})$ is the total number of intersections and self-intersections. Combining
Eq.~(\ref{top-to-comb}) with the relation
$\varepsilon_{C}(C_{j})=\varepsilon_{T}\left(f(C'_{j})\right)=(-1)^{s\left(f(C'_{j})\right)}$ we
complete our topological interpretation of the statement of Lemma~\ref{Lemma}.

\subsection{Topology on Surface Graphs: related approaches and future work}

A topological interpretation of Lemma~\ref{Lemma} discussed in this Appendix is based on
establishing an equivalence between the definitions of $\varepsilon(C)$ in terms of a Kasteleyn
orientation on the extended graph Eq.~(\ref{define-epsilon}) and a spinor structure on
$\mathbb{R}^{2}$, respectively. This description allows generalization from $\mathbb{R}^{2}$ to surfaces $M^{2}$ with finite
genus. A relation between Kasteleyn orientations on surface graphs ${\cal G}\subset M^{2}$
and spinor structures on the surface has been established by Cimasoni and Reshetikhin \cite{07CR},
who demonstrated that a combination of a Kasteleyn orientation and a valid dimer configuration on
${\cal G}$ produces a spinor structure on $M^{2}$. We have implemented a very similar, still
different approach by associating a spinor structure on $\mathbb{R}^{2}$ with a Kasteleyn
orientation on the extended graph ${\cal G}^e\subset \mathbb{R}^{2}$. In a forthcoming
publication we will demonstrate that generalization of our construction solves the problem in the general $M^2$
surface case, thus allowing to relate partition function of a general discrete-variables Wick model
defined on a graph embedded in a surface of genus $g$ to a sum over $2^{2g}$ contributions, each
associated with a partition function of a fermion model of an allowed equivalence classes of spinor
structures of $M^2$.

It is also worth pointing out that Cimasoni and Reshetikhin \cite{07CR} used the aforementioned
self-intersection invariant $N(C)$ implicitly since the quadratic form $q$ they have used to study
the dimer model on surface graphs and the one considered here (in our planar case $q\equiv 0$) are
actually the same. However, the objects that naturally arise in the studies of the dimer model on
surface graphs are represented by simple unions of non-self-intersecting loops
\cite{99GLa,99GLb,07CR,08CR}, and, therefore, the issue of intersections is never raised in the
context of the dimer model. On the contrary, intersections play a key role in establishing the
correspondence between the Wick and fermion models, and should be handled explicitly. To the best
of our knowledge, in the context of $2d$ statistical mechanics for the first time the
self-intersection invariant, of the type discussed above, however in a gauge of another form, has been
brought up by Kac and Ward \cite{52KW} to come up with an exact solution for the two-dimensional
Ising model on a square grid. In the later work, where the Majorana spinors on irregular lattices
have been studied \cite{86BM,01BBJK}, the authors have been using the complex valued phase factors $e^{i\theta}$ to establish relations
between the free fermion and binary (Ising) models. This can be viewed as an extension of the
approach of \cite{52KW} from a regular to irregular lattices. In this manuscript we have, in a
sense, followed \cite{07CR} by working with the ``Kasteleyn" phase factors $\pm 1$.
The equivalence between the approaches that use different phase factors can be
established using a gauge transformation of the field ${\bm A}$ that describes a spinor structure
according to Eq.~(\ref{define-Dirac-phase}).

\section{Extending the gauge group}
\label{sec:gauge-extend}

In this Appendix we discuss an alternative way of reducing the three illustrative models, dimer,
ice, and Ising discussed in Section \ref{sec:gauge}, to the WBG model described by
Eqs.~(\ref{Z_WBG},\ref{W-expressions}). In Subsection \ref{subsec:gauge-reminder} the dimer model
has been converted to the WBG model via an example of the gauge transformation described in
Eqs.~(\ref{Gauge1},\ref{skew},\ref{Z_EBG}). The ice and Ising models strictly speaking do not
belong to the class of vertex models, considered in this manuscript, and an additional geometrical
transformation (an extension of the original graph) has been introduced in Sections
\ref{subsec:ice},\ref{subsec:Ising}.

In this Appendix we demonstrate that the equivalence of the ice model and the Ising model to respective
WBG models can be established without extending the original graph. To that end we extend the gauge
group (i.e., the group of gauge transformations) by relaxing the constraints given by
Eq.~(\ref{skew}). Specifically, we relax the constraint $\pi_{ab}=\pi_{ba}$ in the definition of
the edge variables, so that the edge variables ${\bm\pi}_{\alpha}$ are represented by pairs
${\bm\pi}_{\{a,b\}}=\left(\pi_{ab},\pi_{ba}\right)$ of binary variables. In addition to the vertex
weights $f_{a}({\bm\pi}_{a})$ we further introduce the edge weights
$g_{\alpha}({\bm\pi}_{\alpha})$. We will also use a notation $g_{ab}(\pi_{ab},\pi_{ba})$ with a
natural constraint $g_{ab}(\pi_{ab},\pi_{ba})=g_{ba}(\pi_{ba},\pi_{ab})$ that makes this notation
consistent with the original one. The partition function of such an Extended EBG, thus EEBG, becomes
\begin{eqnarray}
\label{define-Z-ext} Z_{EEBG}=\sum_{{\bm\pi}}\prod_{a\in{\cal
G}_{0}}f_{a}({\bm\pi}_{a})\prod_{\alpha\in{\cal G}_{1}}g_{\alpha}({\bm\pi}_{\alpha})
\end{eqnarray}
A gauge transformation of an extended model is described by a set
$\left\{G_{ab}(\pi_{ab},\pi'_{ab})\right\}_{\{a,b\}\in{\cal G}_{1}}$ of local invertible $2\times
2$ matrices with no other restrictions. A gauge transformation for the vertex functions is still
given by Eq.~(\ref{Gauge1}), whereas the edge functions are transformed according to the rule
\begin{eqnarray}
\label{Gauge-edge-1} g_{\{a,b\}}({\bm\pi}_{\{a,b\}})\to
\tilde{g}_{\{a,b\}}({\bm\pi}_{\{a,b\}})=\sum_{{\bm\pi}'_{\{a,b\}}}G^{ab}(\pi_{ab},\pi'_{ab})
G^{ba}(\pi_{ba},\pi'_{ba})g_{ab}(\pi'_{ab},\pi'_{ba})
\end{eqnarray}
where $G^{ab}$ denote the matrices inverse to $G_{ab}$, i.e.,
\begin{eqnarray}
\label{define-G-inverse}
\sum_{\pi_{ab}}G^{ab}(\pi''_{ab},\pi_{ab})G_{ab}(\pi_{ab},\pi'_{ab})=\delta(\pi''_{ab},\pi'_{ab})
\end{eqnarray}
Obviously the partition function of the extended model is invariant with respect to the extended
gauge transformations.

We further note that the EBG model (\ref{Z_EBG}) is a particular case of the EEBG
model (\ref{define-Z-ext}) that corresponds to the choice of the edge functions in a form
\begin{eqnarray}
\label{g-as-delta} g_{ab}(\pi_{ab},\pi_{ba})=\delta(\pi_{ab},\pi_{ba})
\end{eqnarray}
and the EBG-model gauge transformations, i.e., the ones that satisfy the constraints of
Eq.~(\ref{skew}), can be viewed as the extended gauge transformation that preserve the edge
functions of the form, given by Eq.~(\ref{g-as-delta}). Obviously, any extended model is equivalent
to an EBG model.

The ice model represents a class of graph-based models, hereafter referred to as arrow models on
graphs (AG), where a configuration is given by a graph orientation, i.e, a set of arrows associated
with the non-oriented edges, rather than binary variables residing on the edges. Any AG model can
be viewed as a particular case of EEBG model by associating with a local edge configuration of an
AG model on $\{a,b\}$, determined by an arrow $a\to b$, the edge configuration ${\bm\pi}_{ab}$ with
$\pi_{ab}=0$ and $\pi_{ba}=1$. Obviously the obtained EEBG model corresponds to the choice of the
edge functions represented by the same off-diagonal matrix with $g_{ab}(0,0)=g_{ab}(1,1)=0$ and
$g_{ab}(0,1)=g_{ab}(1,0)=1$. A homogeneous gauge transformation of this form is
\begin{eqnarray}
\label{arrow-to-EBG} G^{ab}=\frac{1}{\sqrt{2}}\left(\begin{array}{cc} 1 & 1\\ i &
-i\end{array}\right)
\end{eqnarray}
transforms the edge functions $g_{\alpha}$ of an AG model into respective ones described by
Eq.~(\ref{g-as-delta}). This establishes an equivalence between an AG model and respective EBG
model. Note that the edge function for an AG model is symmetric and, therefore, can be viewed as a
quadratic form. Then the gauge transformation can be interpreted as the quadratic form
diagonalization. If performed over real numbers the result has a diagonal form with the signature
$(1,-1)$. The imaginary units in the matrix elements in Eq.~(\ref{arrow-to-EBG}) are responsible
for transforming the obtained diagonal matrix into the form given by Eq.~(\ref{g-as-delta}).

Consider a subclass of AG models, hereafter referred to as even AG models, whose vertex functions
stay invariant upon a change the direction of all arrows, associated with the vertex. (Note that the
ice model belongs to this subclass.) It is easy to show that the described above homogeneous gauge
transformation transforms an even AG model into an even EBG model. Also note that in this even case
the vertex functions of the resulting even EBG model are real, despite of the presence of the
imaginary unit in the gauge transformation given by Eq.~(\ref{arrow-to-EBG}). The suggested gauge
transformation identifies the subclass of AG models (including, e.g., the ice model) resulting in
WBG models upon the gauge transformation given by Eq.~({arrow-to-EBG}), and, therefore are easy.

The Ising model on an arbitrary graph ${\cal G}$ can be also viewed as an EEBG model  with the
vertex and edge function given by
\begin{eqnarray}
&& f_a({\bm\pi})=\left\{\begin{array}{cc} 1, & \forall b,c\in\delta_{\cal G}(a)\quad \pi_{ab}=\pi_{ac},\\
0, & \mbox{otherwise}\end{array}\right.,\label{Is2-gauge}\\
&& g_{ab}(\pi_{ab},\pi_{ba})=\left\{\begin{array}{cc}
\gamma, & \pi_{ab}=\pi_{ba}\\
\mu, & \pi_{ab}\neq\pi_{ba}\end{array}\right., \label{Is3-gauge}
\end{eqnarray}
which differ from Eqs.~(\ref{Is2}) and (\ref{Is3}) actually by notations only. The gauge
transformation that diagonalizes the vertex functions $g_{\alpha}$, viewed as quadratic forms,
turns the Ising model into an even EBG model that belongs to the WBG subclass, as outlined in
subsection \ref{subsec:Ising}.

Finally, we note that the approach discussed in this Appendix, based on the extended
gauge transformation, can be viewed as an alternative formulation leading to the well-known
high-temperature expansion of the Ising model, when the expansion is interpreted as a loop series
with only $\mathbb{Z}_2$-cycles providing non-zero contributions.

\end{document}